\DeclareMathOperator{\e}{\mathrm{e}}
\DeclareMathOperator{\dd}{\mathrm{d\!}}
\renewcommand{\Re}{\textrm{Re}}
\newcommand{\R}{\varmathbb{R}}
\newcommand{\Cb}{\mathcal{C}}
\newtheorem{theorem}{Theorem}[section]
\newtheorem{prop}[theorem]{Proposition}
\newenvironment{proof}[1][]{\par\noindent\textbf{Proof #1: }}{\hfill\rule{1.3ex}{1.3ex}\\\par}
\numberwithin{equation}{section}
\begin{document}
\pagestyle{myheadings}
\markright{Piotrowska,~M.J.: An immune system--tumour interactions model with discrete time delay}

\noindent\begin{tabular}{|p{\textwidth}}
	\Large\bf An immune system --- tumour interactions model with discrete time delay: model analysis and validation\\\vspace{0.01cm}
    \it Piotrowska, Monika Joanna\\\vspace{0.02cm}
\it\small Institute of Applied Mathematics and Mechanics,\\
\it\small Faculty of Mathematics, Informatics and Mechanics,\\
\it\small University of Warsaw, Banacha 2, 02-097 Warsaw, Poland\\\vspace{0.01cm}
\footnotesize  \texttt{monika@mimuw.edu.pl},\\
    \multicolumn{1}{|r}{\large\color{orange} Research Article} \\
	\\
	\hline
\end{tabular}
\thispagestyle{empty}

\tableofcontents
\noindent\begin{tabular}{p{\textwidth}}
	\\
	\hline
\end{tabular}

\begin{abstract}

In this article a generalized mathematical model describing the interactions between malignant tumour and immune system with discrete time delay incorporated into the system is considered. Time delay represents the time required to generate an immune response due to the immune system activation by cancer cells. The basic mathematical properties of the considered model, including the global existence, uniqueness, non-negativity of the solutions, the stability of steady sates and the possibility of the existence of the stability switches, are investigated when time delay is treated as a bifurcation parameter. The model is validated with the sets of the experimental data and additional numerical simulations are performed to illustrate, extend, interpret and discuss the analytical results in the context of the tumour progression. 
\end{abstract}

\textbf{Keywords: } delay differential equations, stability analysis,  Hopf bifurcation,  immune system, tumour,  immunotherapy

\section*{Acknowledgements}
This work was supported by the Polish Ministry of Science and Higher Education, within the Iuventus Plus Grant: ''Mathematical modelling of neoplastic processes" grant No. IP2011 041971.


\section{Introduction}

Cancer immunology is a new trend in immunology that focuses on the variety of interactions between the 
immune system and cancer cells. The aim of cancer immunology is to discover new cancer immunotherapies to 
fully cure or at least to retard the progression of the disease. 
According to recent reports~\cite{Schreiber1999} lymphocytes sensitive to tumour antigens are often present in the patient's body, providing a potentially natural way to prevent tumour growth.
On the other hand, it is also observed that the full immune response usually does not occur
and so the natural protection afforded by the immune system is ineffective.

In general the interactions between tumour cells and immune system are very complex and involve a number of different kinds of molecules, immune cells and events, \cite{Alberts2008}. 
The immune response against tumour antigens involves different kinds of immune cells: B-Lymphocytes producing and secreting antibodies into the blood or placing them on their surfaces,  
cytotoxic T-Lymphocytes --  the effector cells  that destroy the 
antigens and  thus the antigen-bearing cells, T helper-lymphocytes secreting interleukins, hence stimulate both 
T and B cells to divide and finally T-suppressor cells that end the immune response.
All these cells are produced in the body from a small fraction of the precursor cells. Whenever antigens are 
presented to the precursor cells they are activated, start to proliferate and then differentiate into effector or 
memory cells. Unfortunately, these mechanism do not guarantee the fully efficient response of the immune 
system due mainly to the tumour's own defence system.
Tumour cells protect themselves from contact with lymphocytes by, amongst other methods, surface 
expression of ligands which initiate the apoptotic signal in the cytotoxic T-Lymphocytes, thus killing off the 
cytotoxic T-Lymphocytes before the necessary cell-contact with the tumour cells can be initiated.

Clearly, due to the complexity of the considered problem systems describing most of the tumour-immune system interactions would involve a large number of variables and thus equations, as in \cite{Prikrylova1992}, 
where a model of 12 equations  is proposed and numerically analysed or in \cite{Kuznetsov1994}, where a 
system of 5 equations describing the response of the effector cells to the growth of tumour cells is studied. 
More recently, systems of four or more equations investigating the immunotherapy and its improvement for 
high grade gliomas (\cite{nkykvvza08cancer,agur10siamappl}), predicting outcomes of prostate cancer 
personalized immunotherapy (\cite{Kronik2011}) or humoral mediated immune response~\cite{Feyissa2013} were proposed.
Also well known immune system Marchuk's model,~\cite{fomar00, fomar97, uf09jmaa}, describing the immune system dynamics which was applied to the description of the tumour growth phenomenon, see e.g.~\cite{uf02jtb}. 
However, to better understand the main underlying mechanism of immune response to the presence of the 
tumour cells complex models were simplified and thus reduced into two- or three-equation and less detailed ones, see e.g. \cite{Kirschner1998,  Sotolongo-Costa2003, Galach2003a, Onofrio2005ab, Yafia2007, 
Rodriguez-Perez2007,forys2006} or \cite{Banerjee2007,d2010delay, Piotrowska2012RWA, Rihan2012,Bi2013, 
Rihan2014,Frascoli2014}.

In this article the influence of a 
particular modification to the generalization of the two-equation model proposed in \cite{forys2006} is studied. The proposed generalization concerns the form of the stimulus function, namely only some essential assumptions regarding the shape and regularity of this function are assumed for the theoretical part of the paper. The proposed modification reflects the fact that the immune system requires some time before the presence of a tumour (actually, tumour antigens)  in the body 
is detected and the immune system is able to respond. This assumption leads to the incorporation of a 
discrete time delay in the function describing the immune system stimulation by presence of the tumour cells. For the obtained system the basic mathematical properties of the model,  the possible 
asymptotic behaviour of the system depending on the delay and the role of the native background immunity and the immune stimulation strength are investigated.  For the particular form of the stimulus function the considered model is validated with two sets of the experimental data for particular tumour cell line  and further studied in the context of 
the model dynamics within the framework of the other experimental data.
It is shown that model can reflect different patterns, including periodic, reported in the experimental literature. 
For the estimated parameter values it is shown that an increase of the 
background immunity or the stimulation strength of the tumour cells on the immune system enlarge the stability region for the positive dormant steady state.
Moreover, the increase of the background immunity might stabilize the  
unrestricted 
tumour growth observed for the control experiment.

The paper is organized as follows: in Section \ref{MP} the model without delay and summary of the existence and 
stability of the non-negative steady states are presented; in Section \ref{DM} the model with discrete delay is 
described and 
analysed. 
Next, in Section \ref{sec:model_val} model is validated with the experimental data. The numerical simulations 
confirming and extending the analytical results are shown in Section \ref{sec:res}. Finally, in Section \ref{sec:diss}  
above results 
are summarised and discussed.
%
\section{Presentation of the base model}\label{MP}

In~\cite{forys2006} a simple model for tumour-immune system interaction, that takes into account the two most 
important variables:  the size of specific anti-tumour immunity at time $s$ ($X(s)$) and the size of  tumour
($Y(s)$), is studied. To  describe  the size of the overall specific immunity 
against the tumour Fory\'s et al.,  \cite{forys2006},  make the following assumptions.
First, in the absence of tumour antigens one observes a constant (with rate $w$) production of precursor 
cells that are able to respond to the tumour antigens and the natural death of the cells, which is expressed 
by a linear term $uX$. Thus, as a consequence of the tumour absence within the body a constant immunity 
level (so-called background immunity) is kept.  Second, the presence of tumour antigens stimulate an increase 
of the immunity level. Moreover, it is assumed that immunity response is proportional to the current size of the 
immunity $X$ with the proportionality coefficient ($aF$), which is assumed to be a bounded function of both 
immunity size and tumour antigens or tumour antigens only. It is also assumed that the creation of the 
antibody-antigen complexes lead to the death of the immune cells and it is modelled by a bilinear term $XY$ 
with constant coefficient $b$. Clearly, if for particular tumours this process is not observed, then $b=0$. Finally, 
the destructive influence of immune system on the tumour development is described by the bilinear term 
$XY$ with the constant rate $c$, similarly as it was assumed in \cite{Onofrio2005ab}. Additionally, it is 
assumed that the tumour grows according to the exponential law, as e.g. in~\cite{Sotolongo-Costa2003,Banerjee2007, Rodriguez-Perez2007}, with growth rate equal to $r$. 
The system proposed in\cite{forys2006} has the following form

\begin{subequations}\label{system:nondim}
	\begin{align}
		\frac{\dd X}{\dd s}&=w - uX + aF(X,Y)X - bXY, \label{system:nondim:1}\\
		\frac{\dd Y}{\dd s}&=Y(r-cX), \label{system:nondim:2} 
	\end{align}
\end{subequations}
where the presence of tumour antigens that stimulate the immune system response ($F$) is defined by  
\begin{align}
F_1(X,Y)	=\frac{Y^\alpha}{k_1^\alpha X^\alpha+Y^\alpha}, \quad \text{or}\quad
F_2(X,Y)	=\frac{Y^\alpha}{k_2^\alpha+Y^\alpha}\label{F1}.
\end{align}
Different forms of $F_1$ and $F_2$ reflect different assumptions on tumour--immunity interactions, and both 
of them were earlier considered in the literature in the context  of the immune system 
response. 
For  $F=F_2$ it is assumed that the number of stimulated cells does not depend on the size of the immune system itself, 
hence stimulation is signal dependent. Such a form of the stimulus function in the 
context of immune
 response was considered for example in~\cite{Bell1973, Kuznetsov1994, Mayer1995, Pillis2001} and more 
recently in~\cite{forys2006, d2010delay, Rihan2014}. In case of  $F=F_1$  a number of 
signal molecules need to be reached before faster production of immune cells is activated (after which, the
size of anti-tumour immunity increases) hence such a function is often called a ratio-dependent stimulation 
function and it was considered in the immune system - tumour interactions context in~\cite{Prikrylova1992}.

Due to the biological interpretation of constants, it is assumed that all parameters are non-negative and in 
particular parameters $a$, $c$, $u$, $r$, $w$ and $k_i$, $i=1,2$ are strictly positive. Moreover, it is assumed that 
$\alpha$, which describes the switching characteristics of 
the stimulus functions, is greater or equal to 1.  

In order to simplify the model analysis, in the analytical part of the paper, its non-dimensional form is considered

\begin{subequations}\label{system:original}
	\begin{align}
		\frac{\dd x}{\dd t}&=d\big(\omega -x+\rho f(x,y)x-\psi xy\big),\label{system:original:x}\\
		\frac{\dd y}{\dd t}&=y(1-x), \label{system:original:y}
	\end{align}
\end{subequations}
where $\omega=cw/(ur)>0$ is the native (background) immunity, i.e. the natural level of immunity in the absence of 
antigens, $\rho=a/u>0$ is the stimulation strength, $\psi=bY_0/u\geq0$ ($Y_0=k_1r/c$ or $Y_0=k_2$ 
depending on the form of $F$ function considered) is the strength of anti-immune activity, $d=u/r>0$ and 
$f(x,y)$ is the stimulus function given by

\begin{align}
f_1(x,y)=\frac{y^\alpha}{x^\alpha+y^\alpha}\quad\text{or}\quad
f_2(x,y)=\frac{y^\alpha}{1+y^\alpha},\quad \alpha\geq1,\label{f1}
\end{align}
time and variables are scaled $t=rs$ and $x=cX/r$, $y=Y/Y_0$, respectively.

As mentioned before in the well established literature
the stimulating effect of the presence of tumour antigens on the immune system
response is usually described by the function $f=f_2$ and often (\cite{Kuznetsov1994, Pillis2001,d2010delay, Rihan2014})
with $\alpha=1$ (that is by the Michaelis-Menten function). Hence, that form is considered in the part of paper devoted to the  model fitting to the experimental data and its further validation. However,  function $f_2$ given by~\eqref{f1} has several general properties, thus for the theoretical part on the paper a more general family of stimulus functions, fulfilling conditions: 
\begin{enumerate}[\bf ({A}1)]
	\item the function $f$ is a~non-negative Lipschitz continuous \label{as1}
		on $\R_{\ge}=[0,+\infty)$;
	\item $f(y)$ is bounded by one for all  $y\in\R_{\geq}$; \label{as2} 
	\item $f(0)=0$, $\lim\limits_{y\to+\infty} f(y) = 1$, $f(1)=1/2$;\label{as3}
	\item $f(y)$ is increasing for all $y\in\R_{\geq}$;\label{as4}
	\item $f(y)$ is differentiable for all $y\in\R_{\geq}$;\label{as5}
	\item the function $f$ has at most one inflection point. \label{as7}
\end{enumerate}
and if further needed the additional smoothness
\begin{enumerate}[\bf ({A}1)]
\setcounter{enumi}{6}
	\item $f(y)\in C^2$ for all $y\in\R_{\geq}$,\label{as8}
\end{enumerate}
is considered.

Clearly, assumption~(A\ref{as1}) together with the form of the system~\eqref{system:original} guarantees 
existence of non-negative unique solutions of~\eqref{system:original} with non-negative initial 
condition~\eqref{incond}. Additionally, the global existence of the solution is a~consequence of the assumption (A\ref{as2}).

In~\cite{forys2006} the importance of parameters $\omega$ (background immunity) and $\mu=\omega-1+\rho$, 
(so-called overall immune capacity) is stressed since both of them strongly influence the existence and the stability 
of the steady states. Hence, due to the biological interpretation of the parameter $\mu$ in the rest of the paper its positivity, i.e., 
\[\omega+\rho>1 \]
is assumed.

Because of the form 
of Eq.~\eqref{system:original:y} all non-negative steady states are always of the form $(\zeta,0)$ or $(1,
\zeta)$, where $\zeta>0$. Assumption $f(0)=0$ (see (A\ref{as3})) implies that there always exists 
a~semi-trivial steady state $A=(\omega,0)$. Clearly,  the existence and number of positive steady states depends 
on the number of solutions of equation
\begin{equation}\label{positiv:ss}
	\rho f(y)= \psi y+1 - \omega. 
\end{equation}

Assumption~(A\ref{as7}) implies that there exist at most three positive solutions of~\eqref{positiv:ss}. 
If $\psi=0$, then due to (A\ref{as3}) and (A\ref{as4}) there exist no positive solutions of~\eqref{positiv:ss} 
for $\omega>1$ and there exists exactly one $y_R$ for $\max\{0,1-\rho\}<\omega<1$. If $\psi>0$, then 
the situation is more complicated. If the function $f$ has no inflection point (the assumption (A\ref{as3}) 
implies that  $f$ is concave), then for $0<\omega<1$ Eq.~\eqref{positiv:ss} has zero or two solutions in 
a~generic case. 
On the other hand, for $\omega>1$ Eq.~\eqref{positiv:ss} has exactly one positive solution. If the function
$f$  has one inflection point, Eq.~\eqref{positiv:ss} has one or three positive solutions for $\omega>1$ and 
zero or two for $0<\omega<1$ (in a~generic case). 

Arguments analogous to those presented in~\cite{forys2006} imply that the stability of the steady states  of the 
system~\eqref{system:original} with $f$ fulfilling conditions (A\ref{as1})-(A\ref{as7}) are the same as for 
particular forms of $f$ functions given by~\eqref{f1}. 
The results regarding the existence and stability of the steady states of~\eqref{system:original} are 
summarized in Tab.~\ref{table1} and partially in Fig.~\ref{fig:0}. Note that, in~\cite{forys2006} additionally it 
is shown that there are no periodic solutions of the system~\eqref{system:original}.


\begin{center}
\begin{table}[htb]
\footnotesize{
\centering
\caption{Existence and stability of the steady states of the system \eqref{system:original} for  $f$ fulfilling conditions (A\ref{as1})--(A\ref{as7}) depending on values of parameters $\alpha,\omega,\psi,\rho$.  Numbers in
the column ips  indicate the number of possible inflection points of the function $f$.}\label{table1}
\begin{tabular}{|c| c| c| c|l |}\hline
$\psi$	&	$\omega$	& ips &	Steady states 	& Stability\\
\hline
\multicolumn{5}{|c|}{\textit{Semi-trivial steady states}}\\
\hline
\multirow{2}{*}{$\psi\ge0$}&	$0<\omega<1$	& \multirow{2}{*}{$0,1$}		& \multirow{2}{*}{$A=(\omega,0)$}					&	unstable\\
					\cline{2-2} \cline{5-5}	
					&	$\omega>1$	&	&	&locally stable\\
\hline
\multicolumn{5}{|c|}{\textit{Positive steady states}}\\
\hline
\multirow{3}{*}{$\psi=0$}	&	$0<\omega<1^*$	& \multirow{2}{*}{$0,1$}	&{$R=(1,y_R)$, $y_R>0$}	                        &     	{globally stable}\\
					\cline{2-2}\cline{4-5}
					&	$\omega>1$		&	& \multicolumn{2}{c|}{none}\\
\hline
\multirow{7}{*}{$\psi>0$}	& 	\multirow{3}{*}{$0<\omega<1$}	& \multirow{3}{*}{$0,1$}	&\multicolumn{2}{c|}{none}\\
												\cline{4-5}
					&				&			&$B=(1,y_B),\,\,C=(1,y_C)$ 				& $B$ locally stable\\
					&			&				&$0<y_B<y_C$			& $C$ unstable\\
					\cline{2-5}
					&\multirow{4}{*}{$\omega>1$}&	 $0,1$	&$C=(1,y_C)$, $y_C>0$				& $B$ unstable\\
												\cline{3-5}
					&					&	 $1$		&$B=(1,y_B),\,\,C=(1,y_C),\,\,D=(1,y_D)$			& $B$ unstable\\
					&					&		&$0<y_B<y_C<y_D$			& $C$ locally stable\\
					&						&	&			& $D$ unstable\\
\hline
\end{tabular}
\\
\small{$^*$ under assumption $\omega+\rho>1$}\\
}
\end{table}
\end{center}

Since in a later section of the paper $f$ is given by the Michaelis-Menten function, the conditions for the existence of positive steady states of the system~\eqref{system:original} for $\alpha=1$ and 
$f=f_2$ are directly provided below (compare with Figure 1).

\begin{prop}\label{prop1}
	Let $\psi=0$ and $f=f_2$ is given by~\eqref{f1}, then 
	\begin{enumerate}[\upshape (i)]
		\item for $\omega>1$ there are no positive steady states of the system~\eqref{system:original};
\item for $\max\{0,1-\rho\}<\omega<1$ there is exactly one positive steady state of the system~\eqref{system:original} given by
\[
R=(1,y_R)=\left(1,\Big(\frac{1-\omega}{\rho+\omega-1}\Big)^{\frac{1}{\alpha}}\right).
\]

	\end{enumerate}
\end{prop}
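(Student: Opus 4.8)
The plan is to reduce the existence question to the scalar equation~\eqref{positiv:ss}, which governs the $y$-coordinate of any positive steady state. Since Eq.~\eqref{system:original:y} forces $x=1$ whenever $y>0$, every positive steady state is of the form $(1,y)$ with $y>0$ solving~\eqref{positiv:ss}. Setting $\psi=0$ and $f=f_2$ from~\eqref{f1}, this equation becomes
\[
\rho\,\frac{y^\alpha}{1+y^\alpha}=1-\omega,
\]
so the whole proposition amounts to counting the positive roots of this single relation and, when one exists, solving for it in closed form.

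For part~(i) I would simply compare signs. For $\omega>1$ the right-hand side $1-\omega$ is strictly negative, whereas the left-hand side $\rho\,y^\alpha/(1+y^\alpha)$ is strictly positive for every $y>0$ (because $\rho>0$ and $y^\alpha>0$). Hence the equation has no positive solution, and there is no positive steady state.

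For part~(ii) I would solve the equation explicitly. Introducing the substitution $u=y^\alpha$ turns it into the linear relation $\rho u=(1-\omega)(1+u)$, whence
\[
u=\frac{1-\omega}{\rho+\omega-1}.
\]
The hypotheses $\omega<1$ and $\omega>\max\{0,1-\rho\}\ge1-\rho$ guarantee that both the numerator $1-\omega$ and the denominator $\rho+\omega-1$ are strictly positive, so $u>0$ and $y_R=u^{1/\alpha}$ is a well-defined positive real number, giving exactly the stated steady state $R$. Uniqueness is immediate: the map $y\mapsto y^\alpha$ is a bijection of $\R_{\ge}$ and the linear equation in $u$ has a single root; equivalently, strict monotonicity of $f_2$ (assumption~(A\ref{as4})) already forces the equation to have at most one positive root.

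The computation is elementary, so the only point requiring care --- the ``main obstacle'' in name only --- is the sign bookkeeping: one must check that $\max\{0,1-\rho\}<\omega<1$ is \emph{exactly} the range of $\omega$ for which the closed-form expression yields a strictly positive quantity, i.e. for which $1-\omega$ lies in the open interval $(0,\rho)$ swept out by $\rho f_2$ over $y>0$. Matching this range condition with the hypotheses is what cleanly separates the two cases and confirms that no boundary or degenerate solution is overlooked.
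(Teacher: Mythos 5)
Your proof is correct and follows essentially the same route as the paper: reduce to the scalar equation~\eqref{positiv:ss} with $\psi=0$ and count its positive roots using the sign of $1-\omega$ and the monotonicity and range of $f_2$. You go slightly further by substituting $u=y^\alpha$ and deriving the closed form for $y_R$ explicitly, which the paper's proof leaves implicit.
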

\begin{proof}
 Because of the form of Eq.~\eqref{system:original:y} all positive steady states are always of the form $(1,\zeta)$, where $\zeta>0$.
Clearly, the existence and number of positive steady states depends on the number of solutions of~\eqref{positiv:ss}. 
For $\psi=0$, due to the facts: $f(0)=0$,  $\lim\limits_{y\to+\infty} f(y) = 1$ and $f$ being increasing, there exist no positive solutions of \eqref{positiv:ss} for $\omega>1$ and there exists exactly one $\zeta$ for $\max\{0,1-\rho\}<\omega<1$.
\end{proof}

\begin{prop}\label{prop2}
	Let $\psi>0$, $\alpha=1$, $f=f_2$ is given by~\eqref{f1} and
		$\Delta = \left(1+(1-\omega-\rho)/\psi\right)^2 - 4(1-\omega)/\psi$,
 then 
	\begin{enumerate}[\upshape (i)]
		\item for $\omega>1$ there exits a~unique positive steady state $C=(1,y_C)$ of the system~\eqref{system:original} with 
		\begin{equation}\label{alpha1:yC}
			y_C = \frac{1}{2}\left(\frac{\rho+\omega-1}{\psi}-1+\sqrt{\Delta}\right).
		\end{equation}
		\item for $0<\omega<1$  and 
		$\rho >\bigl(\sqrt{\psi}+\sqrt{1-\omega}\bigr)^2$
		 two positive steady states $(1,y_B)$ and $(1,y_C)$ of the system~\eqref{system:original} exist where
		\[
			y_B = \frac{1}{2}\left(\frac{\rho+\omega-1}{\psi}-1-\sqrt{\Delta}\right)
		\]
		and $y_C$ is given by~\eqref{alpha1:yC}.
		On the other hand, if  $0<\omega<1$ and $\rho <\bigl(\sqrt{\psi}+\sqrt{1-\omega}\bigr)^2$, then 	there are no positive steady states of the system~\eqref{system:original}. 
	\end{enumerate}
\end{prop}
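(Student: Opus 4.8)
The plan is to reduce the steady-state problem to a single quadratic in $y$ and then read off reality and positivity of its roots from Vieta's formulas. As recalled before Eq.~\eqref{positiv:ss}, every positive steady state has the form $(1,\zeta)$ with $\zeta>0$ solving $\rho f(y)=\psi y+1-\omega$. First I would substitute $f=f_2$ with $\alpha=1$, i.e.\ $f(y)=y/(1+y)$, and clear the denominator by multiplying through by $(1+y)>0$. This turns~\eqref{positiv:ss} into
\[
 \psi y^2+(\psi+1-\omega-\rho)\,y+(1-\omega)=0,
\]
whose discriminant equals $\psi^2\Delta$ with $\Delta$ as defined in the statement, and whose two roots, obtained via the quadratic formula after dividing by $\psi>0$, are exactly the claimed $y_B$ and $y_C$. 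The whole proposition thereby becomes a question of when these roots are real and positive.

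For part~(i), with $\omega>1$ the constant term $(1-\omega)$ is negative, so by Vieta the product of the two roots, $(1-\omega)/\psi$, is strictly negative. Hence the roots are real, distinct, and of opposite sign, giving exactly one positive root; a sign check identifies it as the $+$-root $y_C$. This settles~(i) with no further case analysis.

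For part~(ii), with $0<\omega<1$ the product $(1-\omega)/\psi$ is strictly positive, so the two roots (when real) share a common sign, and everything is governed by the sum $(\rho-\psi-1+\omega)/\psi$ together with the sign of the discriminant. I would factor the reality condition: $\Delta\ge0$ is equivalent to $(\psi+1-\omega-\rho)^2\ge4\psi(1-\omega)$, i.e.\ to $\rho\le(\sqrt{\psi}-\sqrt{1-\omega})^2$ or $\rho\ge(\sqrt{\psi}+\sqrt{1-\omega})^2$. The key observation is then that on the upper branch $\rho>(\sqrt{\psi}+\sqrt{1-\omega})^2$ the sum of roots is also positive—at the threshold it equals $2\sqrt{\psi(1-\omega)}/\psi>0$ and it increases with $\rho$—so both roots are real and positive, yielding $y_B$ and $y_C$. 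Conversely, if $\rho<(\sqrt{\psi}+\sqrt{1-\omega})^2$ then either the discriminant is negative (no real root) or, on the remaining branch $\rho\le(\sqrt{\psi}-\sqrt{1-\omega})^2$, the sum of roots is negative while the product stays positive, forcing both real roots to be negative; in every case there is no positive steady state.

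The one delicate point—really the crux—is verifying that the single threshold $(\sqrt{\psi}+\sqrt{1-\omega})^2$ simultaneously controls both the reality (discriminant) and the positivity (sign of the sum) of the roots, so that one inequality on $\rho$ captures the transition from zero to two positive equilibria. This rests on the identity $(\sqrt{\psi}+\sqrt{1-\omega})^2=\psi+1-\omega+2\sqrt{\psi(1-\omega)}$, which I would check matches both the discriminant-zero condition and the value of $\rho$ at which the sum of roots changes sign; the remainder is routine sign-chasing in Vieta's relations, and I note that $\rho>(\sqrt{\psi}+\sqrt{1-\omega})^2>1-\omega$ automatically respects the standing assumption $\omega+\rho>1$.
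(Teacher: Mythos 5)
Your proposal is correct and follows essentially the same route as the paper: both reduce Eq.~\eqref{positiv:ss} with $f=f_2$, $\alpha=1$ to the same quadratic (the paper merely normalizes by $\psi$ via $\beta=(1-\omega)/\psi$, $\nu=\rho/\psi$) and then settle part (i) by the negative product of roots and part (ii) by showing that of the two branches of the reality condition $\Delta\ge0$ only $\rho>(\sqrt{\psi}+\sqrt{1-\omega})^2$ is compatible with a positive sum of roots. Your observation that this single threshold controls both the discriminant and the sign of the sum is exactly the crux of the paper's argument as well.
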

\begin{proof}

	Eq.~\eqref{positiv:ss} takes form
		$\rho y = \Bigl(\psi y+1-\omega\Bigr) \bigl(1+y\bigr)$.
	Dividing both sides by $\psi$ and denoting 
		$\beta = \frac{1-\omega}{\psi}$ and $\nu = \frac{\rho}{\psi}$
	one obtains 
	\begin{equation}\label{alpha1:rrss}
		y^2 + (1-\nu+\beta)y+\beta = 0.
	\end{equation}
	Inequality $\omega>1$ implies $\beta<0$, hence Eq.~\eqref{alpha1:rrss} has one positive solution and the part (i) follows. 

	For $0<\omega<1$ there is $\beta>0$. 
	If real roots of~\eqref{alpha1:rrss} exist, then they have the same sign. Moreover, they
	are positive if and only 
	\begin{equation}\label{war:dod}
		\nu >1+\beta. 
	\end{equation}
	On the other hand, the existence of real solutions to~\eqref{alpha1:rrss} depends 
	on the sign of the determinant
		$\Delta = (1-\nu+\beta)^2-4\beta = \nu^2-2\nu(1+\beta)+ (1-\beta)^2$.
	It can be easily calculated that $\Delta >0$ if and only if 
	\begin{equation}\label{niernamu}
		\nu < \bigl(1-\sqrt{\beta}\bigr)^2 \quad \text{ or } \quad 
		\nu > \bigl(1+\sqrt{\beta}\bigr)^2.
	\end{equation}
	Note, that for $\beta>0$ the first inequality of~\eqref{niernamu} is equivalent to
		$\nu < 1+\beta - 2\sqrt{\beta}$,
	which contradicts~\eqref{war:dod}. On the other hand, the second inequality of~\eqref{niernamu}
	is equivalent (again for $\beta>0$) to
		$\nu > 1+\beta+2\sqrt{\beta}$,
	which implies~\eqref{war:dod}. Thus, the part (ii) follows. 
\end{proof}

%

\begin{figure}[htb!]
\centerline{
\includegraphics[width=.8\textwidth]{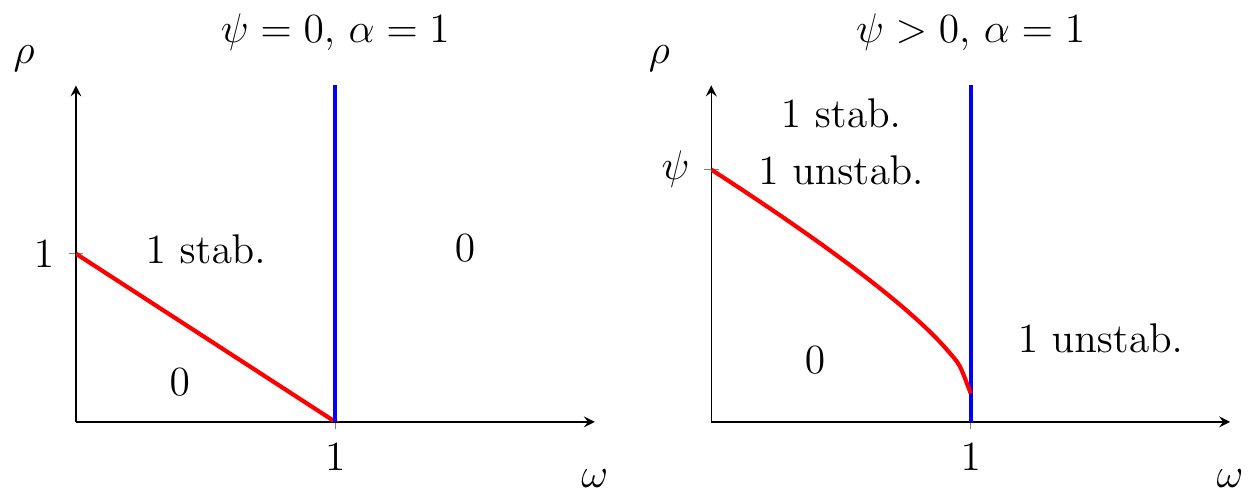}}
\caption{The $\omega-\rho$ space divided into regions where  different numbers of steady states for $\alpha=1$ 
and $\psi=0$ or $\psi>0$ are observed. Numbers and text in the regions indicate the number of positive steady states of the
system~\eqref{system:original} for $f=f_2$ and theirs stability for $\tau=0$; 
abbreviations stab. and unstab. stand for stable and unstable, respectively.}\label{fig:0}
\end{figure}

%
\section{Model with discrete time delay}\label{DM}

The immune system requires some time before the presence 
presence of the tumour (or rather, the tumour's antigens) in the organism
 is detected and the immune system is able to respond. From 
the mathematical point of view, to reflect such phenomena, one can include 
 time lags into the system, e.g. by introducing a discrete delay into the stimulus function. Although the model itself 
differs from the one considered in~\cite{d2010delay}, the idea presented there of  introducing one
discrete time delay only in the function 
describing the stimulating effect of the tumour on immune cells proliferation in the system variable representing 
the tumour cells compartment is followed here. That assumption reflects situation when the immune system 
response due to the presence of the tumour cells 
depends on 
the amount of the immune system cells at a given time $t$ and on the amount of tumour cells at 
time $t-\tau$, where $\tau>0$ 
is the time needed for the immune system to develop a suitable response after antigens recognition.
Following that approach time lag is introduced in a classical way, as it is done in the  logistic model by 
Hutchinson \cite{hutchinson48} or in the Gompertz model \cite{Piotrowska2011a,Bodnar_MBE_gomp2013},
where delay is introduced to the function representing the per capita growth rate. The same assumption of the 
form of time delay in the model of the immune system-tumour interactions has been recently made e.g. 
in~\cite{Bi2013}.
Clearly, in the literature~\cite{Galach2003a,Yafia2007,Rodriguez-Perez2007} discrete delay in the variable 
representing the size of the immune system was also considered. However, this idea is not followed since the 
priority is given to the immune system's time reaction. It is also assumed that the immune system does not 
require time to know the size of itself. 

The non-dimensional system with one discrete delay has the following form
\begin{subequations}\label{system:onedelay}
	\begin{align}
		\frac{\dd x(t)}{\dd t}&=d\big(\omega -x(t)+\rho f(y(t-\tau))x(t)-\psi x(t)y(t) \big), \label{system:onedelay:1}\\%
		\frac{\dd y(t)}{\dd t}&=y(t)(1-x(t)),\label{system:onedelay:2}
	\end{align}
\end{subequations}
where function $f$ fulfils conditions~(A\ref{as1})-(A\ref{as7}) (and (A\ref{as8}) if needed).

To close system~\eqref{system:onedelay} one needs to define the initial condition
\begin{equation}\label{incond}
x(t)=\varphi_1(t),\quad y(t)=\varphi_2(t),\,\,\text{for } t\in [-\tau,0], \quad\varphi_1(t), \varphi_2(t)\in\Cb,
\end{equation}
where $\Cb= \mathbf{C}\left([-\tau,0],\mathbb{R}_{\ge}\right)$ denotes the set of continuous functions defined on the interval $[-\tau,0]$ with non-negative values.
Clearly, since in system~\eqref{system:onedelay} only $y$ variable has delayed argument model dynamics will depend on $\varphi_1(0)$ and $\varphi_2(t)$.

\begin{theorem}\label{existence}
Let $f$ fulfils assumptions~(A\ref{as1})-(A\ref{as2}), then there exists a global, unique and non-negative solution of system~\eqref{system:onedelay}-\eqref{incond}.
\end{theorem}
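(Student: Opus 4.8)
The plan is to treat \eqref{system:onedelay}--\eqref{incond} by the classical \emph{method of steps}, which turns the delay problem into a sequence of ordinary initial value problems, and then to establish the three properties in order: first local existence and uniqueness, then non-negativity, and finally the a~priori bounds that upgrade local to global existence. Assumption~(A\ref{as1}) will be used for uniqueness (it supplies the Lipschitz estimate), while assumption~(A\ref{as2}) will be used to rule out finite-time blow-up; the strict positivity of $\omega$ and the sign conditions $\psi\ge 0$, $f\ge 0$ will deliver non-negativity.

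For local existence and uniqueness I would argue inductively on the intervals $[0,\tau],[\tau,2\tau],\dots$. On $[0,\tau]$ the delayed argument satisfies $y(t-\tau)=\varphi_2(t-\tau)$, a~known continuous function by \eqref{incond}, so the right-hand side of \eqref{system:onedelay} becomes a map $(t,x,y)\mapsto\big(d(\omega-x+\rho f(\varphi_2(t-\tau))x-\psi xy),\,y(1-x)\big)$ that is continuous in $t$ and locally Lipschitz in $(x,y)$: the terms $-x$, $\rho f(\varphi_2(t-\tau))x$, $-\psi xy$ and $y(1-x)$ are polynomial in $(x,y)$ with coefficients bounded on compacta, the factor $f(\varphi_2(t-\tau))$ being a~fixed bounded continuous datum. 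The Picard--Lindel\"of theorem then yields a~unique local solution, and because the solution produced on $[0,\tau]$ is continuous at $t=\tau$ it serves as the initial datum for the next step; iterating gives a~unique solution on a~maximal interval $[0,T_{\max})$.

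Next I would prove non-negativity on $[0,T_{\max})$. For the tumour component, \eqref{system:onedelay:2} integrates explicitly to
\[
y(t)=\varphi_2(0)\,\exp\!\Big(\int_0^t\big(1-x(s)\big)\,\dd s\Big)\ge 0,
\]
which is non-negative since $\varphi_2(0)\ge0$ and the exponential is positive. For the immune component I would rewrite \eqref{system:onedelay:1} as the linear equation $\dot x(t)=d\omega+g(t)\,x(t)$ with the continuous coefficient $g(t)=d\big(-1+\rho f(y(t-\tau))-\psi y(t)\big)$, and set $G(t)=\int_0^t g(r)\,\dd r$. Variation of constants gives
\[
x(t)=e^{G(t)}\Big(\varphi_1(0)+d\omega\int_0^t e^{-G(s)}\,\dd s\Big)\ge 0,
\]
which is non-negative because $\varphi_1(0)\ge0$, $\omega>0$, and every exponential factor is positive. (Equivalently, at any point where $x=0$ one has $\dot x=d\omega>0$, so the trajectory cannot leave $\R_{\ge}$.)

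Finally, for global existence I would derive exponential a~priori bounds and invoke the continuation theorem for delay equations (if $T_{\max}<\infty$ the solution must leave every compact set as $t\to T_{\max}^-$). Using (A\ref{as2}), namely $f\le 1$, together with $x,y\ge0$ and $\psi\ge0$, the non-positive term $-d\psi xy$ can be dropped to obtain $\dot x\le d\omega+d(\rho-1)x$, so Gr\"onwall's inequality bounds $x$ by an explicit exponential and hence keeps $x$ finite on every bounded interval. Feeding $x\ge0$ into \eqref{system:onedelay:2} yields $\dot y\le y$, whence $y(t)\le\varphi_2(0)\,e^{t}$. Since neither component can blow up in finite time, $T_{\max}=+\infty$ and the solution is global. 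I expect the only delicate point to be the bookkeeping of the method of steps across interval junctions and the clean decoupling needed for the blow-up argument; this is where the structure helps, because the bound on $x$ is obtained independently of $y$ (thanks to $f\le1$), so the coupling is benign and the $y$-bound follows immediately.
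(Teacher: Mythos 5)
Your proposal is correct and follows essentially the same route as the paper's proof: the method of steps combined with local Lipschitz continuity from (A\ref{as1}), the explicit exponential integral form of \eqref{system:onedelay:2} for non-negativity of $y$, the observation that $\dot x=d\omega>0$ whenever $x=0$ (which you give both via variation of constants and as the barrier argument the paper uses) for non-negativity of $x$, and the a~priori linear bounds obtained from $f\le 1$ in (A\ref{as2}) to rule out finite-time blow-up. The only cosmetic difference is that your differential inequality $\dot x\le d\omega+d(\rho-1)x$ retains the $-x$ term that the paper discards, which changes nothing.
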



\begin{proof}
For $t\in[0,\tau]$ $y(t-\tau)=\varphi_2(t-\tau)\geq0$ is well defined thus 
for $\varphi_1(t),\varphi_2(t)\in\Cb$, under assumption~(A\ref{as1}), there exists an interval $[0,t^*)\subset[0,\tau]$ where the solution $(x(t),y(t))$ of the system~~\eqref{system:onedelay} exists and is unique, for details see~\cite{Hale1993}.

Re-writing Eq.~\eqref{system:onedelay:2} in the integral form
($y(t)=\varphi_2(0)\exp{\int_0^t(1-x(s))ds}$)
one obtains non-negativity of $y(t)$ whenever it exists. Now, assume that 
at point  $t_1\in(0,t^*)$  $x(t)$ starts to be negative, i.e.  $x(t)>0$ for $0<t<t_1$, $x(t_1)=0$ and $\dot{x}(t_1)\leq0$.
Clearly, for considered $f(y)$ inequality
$\dot{x}(t_1)	=d\big(\omega-x(t_1)+\rho f(y(t_1-\tau))x(t_1)-\psi x(t_1)y(t_1)\big)
				=d\omega >0$, holds.
That contradicts the hypothesis that solution $x(t)$ starts to be negative for $t=t_1$.

On the other hand, the non-negativity of $x(t)$ and $y(t)$ imply $y(t)\le\varphi_2(0)e^{t^*}$. Moreover, stimulus function $f(y)$ fulfils assumptions~(A\ref{as1})-(A\ref{as2}). Hence, $\dot{x}\le d(\omega+\rho x)$ and $x(t)\le e^{d\rho t^*}\left(\varphi_1(0)+\frac{\omega}{\rho}\right)-\frac{\omega}{\rho}$. Thus, both 
variables and theirs derivatives are bounded, so the solution can be extend through the point $t^*$. Following the method of steps (see 
e.g.~\cite{Hale1993}) and repeating presented arguments for each interval $[n\tau, (n+1)\tau]$, 
$n\in\mathbb{N}$ one completes the proof.
\qquad\end{proof}

%
\subsection{Stability of the steady states}
Clearly, the steady states of the system with delay~\eqref{system:onedelay} are the same as for the system 
without delay~\eqref{system:original}. However, the system's stability, already strongly depending on the 
values of parameters $\phi$ and $\omega$ without delay, now can also depend on the value of parameter 
$\tau$.

\begin{theorem}\label{trivialss}
Let $f$ fulfils conditions (A\ref{as1})--(A\ref{as5}). The semi-trivial steady state $A=(\omega, 0)$ of system~\eqref{system:onedelay}-\eqref{incond} is a locally stable node for $\omega>1$ and a saddle for $\omega<1$ independently of the value of parameter $\tau\geq0$.
\end{theorem}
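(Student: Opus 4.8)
The plan is to establish the classification by linearising system~\eqref{system:onedelay} at $A=(\omega,0)$ and analysing the resulting characteristic equation, exploiting the fact that the $y$-equation carries no delay and contributes a decoupled scalar mode. First I would shift the steady state to the origin via $x=\omega+\xi$, $y=\eta$. Using $f(0)=0$ from~(A\ref{as3}) and differentiability at $0$ guaranteed by~(A\ref{as5}), the linearised $x$-equation reads $\dot\xi=-d\xi-d\psi\omega\,\eta+d\rho\omega f'(0)\,\eta(t-\tau)$, while Eq.~\eqref{system:onedelay:2} linearises to the scalar, delay-free equation $\dot\eta=(1-\omega)\eta$, which does not involve $\xi$ because $\partial_x\bigl[y(1-x)\bigr]=-y=0$ at $A$.

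Next I would form the characteristic equation $\det\bigl(\lambda I-M_0-M_\tau e^{-\lambda\tau}\bigr)=0$, where $M_0$ collects the instantaneous coefficients and $M_\tau$ the single delayed coefficient $d\rho\omega f'(0)$. The key observation is that both $M_0$ and $M_\tau$ are upper triangular, hence so is the characteristic matrix, and its determinant is simply the product of the diagonal entries, $(\lambda+d)\bigl(\lambda-(1-\omega)\bigr)$. In particular the delayed off-diagonal entry $d\rho\omega f'(0)e^{-\lambda\tau}$ never enters the determinant, so the characteristic equation collapses to a quadratic polynomial with no transcendental part.

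From here the conclusion is immediate. The two characteristic roots are $\lambda_1=-d<0$ (since $d>0$) and $\lambda_2=1-\omega$, both real and, crucially, both independent of $\tau$. For $\omega>1$ one has $\lambda_2<0$, so both roots are negative reals and $A$ is a locally stable node; for $\omega<1$ one has $\lambda_2>0$, giving one positive and one negative real root, i.e.\ a saddle. Because neither root depends on $\tau$, no root can cross the imaginary axis as $\tau$ varies, so no stability switch is possible and the classification holds for every $\tau\ge0$.

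I expect the only subtle point --- which is also the heart of the argument --- to be the recognition that the triangular structure of the linearisation forces the exponential term to drop out of the determinant. This is precisely what rules out delay-induced instabilities and explains why the result is $\tau$-independent, in contrast to the positive steady states treated later, where the delayed term genuinely survives in the characteristic equation and stability switches can arise.
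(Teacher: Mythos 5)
Your proof is correct and follows the same basic route as the paper (linearisation at $A$), but your execution is more careful on the one point where the paper is actually imprecise. The paper's proof asserts that linearising around $A$ yields a \emph{system without delay}, identical to the linearisation of~\eqref{system:original}, and concludes from this that $\tau$ is irrelevant. That assertion is literally true only when $f'(0)=0$ (e.g.\ $f=f_2$ with $\alpha>1$); for $\alpha=1$ one has $f'(0)=1$ and the linearised $x$-equation genuinely contains the term $d\rho\omega f'(0)\,\eta(t-\tau)$, exactly as you write. Your observation that this delayed term sits in the off-diagonal entry of an upper-triangular characteristic matrix, and therefore cancels from the determinant $(\lambda+d)\bigl(\lambda-(1-\omega)\bigr)$, is the correct reason the conclusion survives, and it supplies the justification the paper's one-line argument glosses over. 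The remaining steps --- roots $-d<0$ and $1-\omega$, hence a stable node for $\omega>1$ and a saddle for $\omega<1$, with no possible root crossing since neither root depends on $\tau$ --- match the paper's conclusion. One minor point worth a sentence in a polished write-up: for the delay system, ``node'' and ``saddle'' should be read as statements about the linearisation together with the principle of linearised stability for retarded equations (a root with positive real part gives instability, all roots in the open left half-plane give local asymptotic stability), which is the standard justification both you and the paper implicitly invoke.
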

\begin{proof}
For $\tau=0$ steady state $A$ is a locally stable node (for $\omega >1$) or a saddle (for $0<\omega<1$) and it 
always exists, for details see~\cite{forys2006}.
Consider $\tau>0$. Linearising system~\eqref{system:onedelay} around steady state $A$ one obtains system 
without delay which is the same as the linearisation of the system~\eqref{system:original} without delay. Hence, parameter $\tau$ has no 
influence on the local stability of semi-trivial steady state $A$. 
\qquad\end{proof}
As pointed earlier, for all positive steady states the first coordinate equals to 1, thus  
$(1,\zeta)$, $\zeta>0$ denotes the positive steady state of the system~\eqref{system:onedelay}.
Moreover, to simplify the notation define
\begin{equation}\label{betagamma}
	 \gamma_f :=  \rho f' (\zeta)=\rho\frac{\alpha\zeta^{\alpha-1}}{(1+\zeta^{\alpha})^2}>0,
\end{equation}
which is positive due to the assumption {\bf(A\ref{as4})}.
\begin{theorem}\label{pmn}
Let $f$ fulfils conditions (A\ref{as1})--(A\ref{as8}), $\psi\geq0$ and $\gamma_f>0$ is given by~\eqref{betagamma}.
\begin{enumerate}
\item If $\psi-\gamma_f>0$, then the steady state $(1,\zeta)$ of system~\eqref{system:onedelay} is unstable for all $\tau\ge0$.\label{aaa}
\item If $\psi-\gamma_f<0$, then the steady state $(1,\zeta)$ of system~\eqref{system:onedelay} is locally asymptotically stable for $\tau=0$ and there exists a~critical value $\tau_{cr}^\zeta$ such that $(1,\zeta)$ is locally asymptotically stable for $\tau<\tau_{cr}^\zeta$ and is unstable for $\tau>\tau_{cr}^\zeta$. Moreover, for $\tau=\tau_{cr}^\zeta$, the Hopf bifurcation occurs and periodic solutions arise with period equal to $2\pi/\sqrt{\eta^\zeta}$ , where
\begin{equation}\label{criticaldelay2}
\eta^\zeta=\frac{-(2d\zeta\psi+(d\omega)^2)+\sqrt{(d\omega)^4+4d^3\zeta\psi \omega^2+4(d\zeta\gamma_f)^2}}{2}, 
\quad
\tau_{cr}^\zeta=\frac{1}{\sqrt{\eta^\zeta}}\arccos{\left(\frac{\eta^\zeta+d\zeta\psi}{d\zeta\gamma_f}\right)}.
\end{equation}
\label{bbb}
\end{enumerate}
\end{theorem}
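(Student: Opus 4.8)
The plan is to carry out the standard linear stability analysis for the delay system at the positive steady state $(1,\zeta)$: derive the characteristic equation, dispose of the $\tau=0$ case by Routh--Hurwitz, and then track purely imaginary roots as $\tau$ grows. First I would set $x=1+\tilde x$, $y=\zeta+\tilde y$ and linearise~\eqref{system:onedelay}. Using the steady-state identity $\rho f(\zeta)=1-\omega+\psi\zeta$ (from the vanishing right-hand side of~\eqref{system:onedelay:1}), the coefficient of $\tilde x$ in the first equation collapses to $-d\omega$, the non-delayed $\tilde y$ term contributes $-d\psi$, and the delayed term contributes $d\gamma_f\,e^{-\lambda\tau}$, while the second equation linearises to $\dot{\tilde y}=-\zeta\tilde x$. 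Seeking solutions proportional to $e^{\lambda t}$ then yields the characteristic equation
\begin{equation}\label{chareq}
\lambda^2 + d\omega\lambda - d\zeta\psi + d\zeta\gamma_f\,e^{-\lambda\tau}=0.
\end{equation}

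For $\tau=0$, equation~\eqref{chareq} reduces to $\lambda^2+d\omega\lambda+d\zeta(\gamma_f-\psi)=0$; since $d\omega>0$, the Routh--Hurwitz criterion gives asymptotic stability exactly when the constant term is positive, i.e. when $\psi-\gamma_f<0$, and yields one positive real root when $\psi-\gamma_f>0$. This settles the $\tau=0$ claim in part~\ref{bbb}, and for part~\ref{aaa} I would extend instability to all $\tau\ge0$ by examining the real-valued function $g(\lambda)$ equal to the left-hand side of~\eqref{chareq}: when $\psi-\gamma_f>0$ one has $g(0)=d\zeta(\gamma_f-\psi)<0$, while $g(\lambda)\to+\infty$ as $\lambda\to+\infty$ (the exponential term vanishes), so by the intermediate value theorem a positive real root persists for every $\tau$, proving instability throughout.

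For part~\ref{bbb} with $\psi-\gamma_f<0$ I would locate a crossing of the imaginary axis by substituting $\lambda=i\nu$, $\nu>0$, into~\eqref{chareq} and separating real and imaginary parts to get $d\zeta\gamma_f\cos(\nu\tau)=\nu^2+d\zeta\psi$ and $d\zeta\gamma_f\sin(\nu\tau)=d\omega\nu$. Squaring and adding eliminates $\tau$ and produces a quadratic in $u=\nu^2$ whose constant term is $(d\zeta)^2(\psi^2-\gamma_f^2)<0$; hence there is a unique positive root, which is exactly $\eta^\zeta$ as written in~\eqref{criticaldelay2}, and the smallest positive $\tau$ satisfying both trigonometric relations is $\tau_{cr}^\zeta=\frac{1}{\sqrt{\eta^\zeta}}\arccos\!\left(\frac{\eta^\zeta+d\zeta\psi}{d\zeta\gamma_f}\right)$. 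To upgrade this crossing to a genuine stability switch and a Hopf bifurcation, I would verify the transversality condition by differentiating~\eqref{chareq} implicitly, computing $\big(\frac{d\lambda}{d\tau}\big)^{-1}$ at $\lambda=i\nu$ and using~\eqref{chareq} itself to eliminate $e^{-\lambda\tau}$; the sign of $\Re\,\frac{d\lambda}{d\tau}$ at the crossing reduces to the sign of the derivative of the above quadratic at its positive root, which is positive because the quadratic has positive leading coefficient and negative constant term. This guarantees $\Re\,\lambda$ increases through zero, so the conjugate pair moves into the right half-plane as $\tau$ passes $\tau_{cr}^\zeta$.

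I expect the transversality computation to be the main obstacle, since it requires careful complex algebra to isolate the sign of $\Re\,\frac{d\lambda}{d\tau}$; the remainder is routine once~\eqref{chareq} is in hand. The final bookkeeping concerns monotonicity of the switch: because $\eta^\zeta$ is the unique admissible frequency, no purely imaginary root exists for $0\le\tau<\tau_{cr}^\zeta$, so asymptotic stability is inherited by continuity from the $\tau=0$ case; and because the transversality sign is identical at every crossing $\tau_{cr}^\zeta+2\pi k/\sqrt{\eta^\zeta}$, $k\in\N$, roots only ever move rightward, so instability persists for all $\tau>\tau_{cr}^\zeta$. At $\tau=\tau_{cr}^\zeta$ the Hopf bifurcation then produces periodic solutions of period $2\pi/\sqrt{\eta^\zeta}$, completing the proof.
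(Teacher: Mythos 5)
Your proposal is correct and follows essentially the same route as the paper: the same linearisation yielding the characteristic quasi-polynomial $W(\lambda)=\lambda^2+d\omega\lambda-d\zeta\psi+d\zeta\gamma_f\e^{-\lambda\tau}$, Routh--Hurwitz at $\tau=0$, the same squaring-and-adding reduction to a quadratic in $\nu^2$ whose negative constant term forces a unique positive root $\eta^\zeta$, and the same transversality conclusion --- the paper obtains the sign of $\frac{\partial}{\partial\tau}\Re\lambda$ by citing Cooke and Van den Driessche, whose criterion is exactly the positivity of $\tilde F'(\theta)=4\theta^3+2d\theta(2\psi\zeta+d\omega^2)$ at the crossing frequency that your implicit differentiation would reproduce. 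The one point where you genuinely diverge is part~1: the paper shows that for $\psi-\gamma_f>0$ no purely imaginary root can exist (the crossing condition requires $d\zeta(\psi-\gamma_f)\le-\theta^2$), so the real parts of the eigenvalues cannot change sign and the saddle persists; you instead exhibit a positive real root for every $\tau\ge0$ directly via the intermediate value theorem, using $W(0)=d\zeta(\gamma_f-\psi)<0$ and $W(\lambda)\to+\infty$ along the positive real axis. Your version is marginally more self-contained since it does not appeal to continuity of the spectrum in $\tau$; both arguments are sound, and the remaining bookkeeping (first-quadrant choice of $\arccos$, identical crossing direction at every $\tau_{cr,k}^\zeta$ ruling out restabilisation) matches the paper.
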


\begin{proof}
Linearising system~\eqref{system:onedelay} around steady state $(1,\zeta)$ one obtains
\begin{equation*}
		\frac{\dd x(t)}{\dd t}=d(-1+\rho f(\zeta)-\psi)x(t) -d\psi y(t) +\rho\gamma_f y(t-\tau),\quad\quad
		\frac{\dd y(t)}{\dd t}=-\zeta x(t).
\end{equation*}
Note that for  steady state $(1,\zeta)$ equality $1-\rho f(\zeta)+\psi\zeta=\omega$, holds. Thus,  the characteristic function for each positive steady state $(1,\zeta)$ has the following form
\begin{equation}\label{cqp}
W(\lambda) =\lambda^2 +d\omega\lambda - d\psi\zeta +d\zeta\gamma_f\e^{-\lambda\tau}.
\end{equation}
Clearly, for $\tau=0$ Eq.~\eqref{cqp} reads $W(\lambda)=\lambda^2+d\omega\lambda-
d\zeta(\psi-\gamma_f)$ and for $\psi-\gamma_f>0$ it has two real solutions of opposite signs. Thus, in such a 
case steady state $(1,\zeta)$ is a saddle. On the other hand, for $\tau=0$ and $\psi-\gamma_f<0$ 
both roots are real and negative or complex with negative real parts, hence $(1,\zeta)$ is locally asymptotically 
stable.
From the continuous dependence, one knows that for sufficiency small delays this result holds, however, the 
stability of the steady state(s) might change for a larger values of delay, see e.g.~\cite{Hale1993}. 
Such a situation is only possible if there exists a critical value $\tau=\tau_{cr}^{\zeta}$ and  a pair of purely imaginary roots of~\eqref{cqp} such that roots cross the imaginary axis (in the proper direction)
when $\tau$ exceeds $\tau_{cr}^{\zeta}$.

Let $\pm i\theta$ be a pair of pure imaginary roots of~\eqref{cqp}. Hence, 
$W(i\theta)=-\theta^2+id\omega\theta-d\zeta\psi+d\zeta\gamma_f\big(\cos{\theta\tau}-i\sin{\theta\tau}\big)=0$,
and
\begin{equation}\label{tetataupsi}
\cos{\theta\tau}=\frac{\theta^2+d\psi\zeta}{d\zeta\gamma_f}>0,\quad
\sin{\theta\tau}=\frac{d\omega\theta}{d\zeta\gamma_f}>0.
\end{equation}
A solution of system~\eqref{tetataupsi} exists if condition $d\zeta(\psi-\gamma_f)\leq-\theta^2$ is fulfilled. Hence, 
for $\psi-\gamma_f>0$ the characteristic quasi-polynomial~\eqref{cqp} of the system~\eqref{system:onedelay}  has 
no purely imaginary roots and hence the real parts of eigenvalues can not 
change sign for any value of $\tau\ge0$. 
That completes the proof of part~\ref{aaa}.

%

Consider $\psi-\gamma_f<0$. Squaring and adding  Eqs.~\eqref{tetataupsi} one has
\begin{equation}\label{2:19}
\tilde F(\theta):=\theta^4 
+d\Bigl(2\psi\zeta+d\omega^2\Bigr)\theta^2+d^2\zeta^2\Bigl(\psi^2-\gamma_f^2\Bigr)=0.
\end{equation}
Next, substituting $\eta=\theta^2$ one gets
$\eta^2+\eta((d\omega)^2+2d\zeta\psi)+(d\zeta)^2(\psi^2-\gamma_f^2)=0$
 with roots given by
\begin{equation}\label{eta}
\eta_{1,2}^\zeta=\frac{-(2d\zeta\psi+(d\omega)^2)\pm\sqrt{(2d\zeta\psi+(d\omega)^2)^2-4(d\zeta)^2(\psi^2-\gamma_f^2)}}{2}.
\end{equation}
Clearly,  roots~\eqref{eta} are always real since the expression $(2d\psi\zeta+(d\omega)^2)^2-4(d\zeta)^2(\psi^2-\gamma_f^2)$ is strictly positive for $\psi-\gamma_f<0$. Moreover, for $\psi-\gamma_f<0$ roots have opposite signs. 
Denote the positive root by $\eta_1^\zeta>0$, then there exists $\bar\theta^\zeta=\sqrt{\eta_1^\zeta}$ such that $0<(\bar\theta^\zeta)^2+2d\zeta\psi<d\zeta\gamma_f$, and moreover $W(i\bar\theta^\zeta)=0$ for an infinite sequence of critical values of $\tau$ given by
\begin{equation}\label{tauk}
\tau_{cr,k}^\zeta=\frac{1}{\bar\theta^\zeta}\left(\arccos{\left(\frac{\left({\bar{\theta^\zeta}}\right)^2+2d\zeta\psi}{d\zeta\gamma_f}\right)}+2k\pi\right), \quad k\in\mathbb{N}_0.
\end{equation}

Denote by $\tau_{cr}^\zeta$ the first positive delay fulfilling condition~\eqref{tauk}.
To complete the proof one needs to show that the real part of eigenvalues of roots cross the imaginary axis at $\tau=\tau_{cr}^{\zeta}$ with positive speed, i.e. $\frac{\partial}{\partial\tau}\Re {\lambda(\tau)}|_{\tau=\tau_{cr}^{\zeta}}>0$ and that the re-stabilization of the positive steady state is not possible for larger values of $\tau$. Here, using the theorem proved by Cooke and Van den Driessche in~\cite{Cooke1986}, one can show that  all the roots of the characteristic quasi-polynomial~\eqref{cqp} cross the imaginary axis from the left to the right-hand side of the complex plane. 
The Cooke and Van den Driessche theorem indicates that $\frac{\partial}{\partial\tau}\Re {\lambda(\tau)}|_{\tau=\tau_{cr,k}^{\zeta}}=\frac{d\widetilde{F}(\theta)}
{d\theta}|_{\eta=\bar\eta^{cr}}$,
hence it is enough to calculate $\frac{d\widetilde{F}(\theta)}
{d\theta}=4\theta^3 +2d\theta(2\psi\zeta+d\omega^2)>0$ for $\theta>0$.
Thus, in the case when the steady state is locally asymptotically stable for $\tau=0$, for $\tau=\tau_{cr}^\zeta$  
new periodic orbits arise due to the Hopf bifurcation with period equal to 
$2\pi/\bar{\theta^\zeta}$. Moreover, destabilized steady  state $(1,\zeta)$  remains unstable for all 
$\tau>\tau_{cr}^{\zeta}$. That completes the proof. \qquad
\end{proof}

%
\section{Experimental data, estimation of parameters and model validation}\label{sec:model_val}

The model parameters are obtained by fitting the solutions of the non-scaled version 
of model~\eqref{system:onedelay} to the experimental data provided in~\cite{Uhr1991} for mice affected by 
B-cell lymphoma in the spleen (BCL$_1$). It is assumed that $F$ is given by the Michaelis-Menten 
function ($F=F_2$ defined by~\eqref{F1} with $\alpha=1$) since it is the most common form of the 
stimulation function considered in that context in the literature, see e.g. \cite{Kuznetsov1994, 
Pillis2001,d2010delay, Rihan2014}. Estimation of the ranges for each parameter is based on the experimental 
literature as well as on the assumptions made by other scientists who in particular modelled  the BCL$_1$ and 
DLBCL  (diffuse large B-cell lymphoma) progression or modelled the tumour-immune system interactions in 
general \cite{Kuznetsov1994,Rihan2014,Roesch2013,Pillis2001,Rodriguez-Perez2007,Kirschner1998}.

%
\subsection{Experimental data}
The solutions to non-scaled version of model~\eqref{system:onedelay} are fitted to the experimental data for BCL$_1$ from~\cite{Uhr1991}, where the modelled experiment was performed on two sets of animals: the control group (BALB/c mice) and chimeric. Chimeric animals were obtained from BALB/c mice by having, by  lethal irradiation, their  bone  marrow  destroyed and then having bone marrow  transferred from other mice 
to rebuild the immune system. In  normal  BALB/c mice,  one observes that $5\times10^5$ injected BCL$_1$ cells into the spleens grew rapidly, reaching a plateau after 40 days and causing mice death by 90 days.
In contrast, identically injected groups of chimeric animals demonstrated a different growth pattern: the initial 
rate of growth was similar to the control group,
but then it slowed down with tumours reaching their maximum sizes (10 times smaller than in control mice) and a commensurate decrease in the number of cells was observed after day 40, see Figure~\ref{fig:2}.
%
\subsection{Parameter ranges}

{\bf Tumour growth rate ($r$).} Tumour volume duplication time strongly depends on the tumour type. For some tumours it can be 20 days, while for others, e.g. colon cancer, it can be years. 
On the other hand, the tumour doubling time for malignant lymphoma is reported to be around 29
days, \cite{Tubiana1989}. In \cite{Roesch2013}, for DLBCL, 
it was assumed that the tumour doubling time is between 1.4 and 70 days which corresponds to a tumour growth rate between 0.0099 and 0.4951 day$^{-1}$ (approximately 0.01 and 0.5 day$^{-1}$). That assumption is followed since the interval [0.01,0.5] includes the range 0.05-0.5 proposed in \cite{Rihan2014}.
Clearly, the tumour growth rate estimated by Kuznetsov et al. in \cite{Kuznetsov1994} (0.18 day$^{-1}$) also lies in that interval.




{\bf  Natural lymphocyte death rate ($u$).}
The mean lifetime of  T lymphocytes  from  the  spleen  and  the 
blood was estimated to be approximately 30 days 
or more,~\cite{Reynolds1985}. 
Thus, the degradation rate, which is 1/(mean life time),  is assumed to be in the range $1.25\times10^{-2}-5\times10^{-2}$ day$^{-1}$, which corresponds to $20-80$ days mean life time. The values of the death rate $0.03125$  and $0.0412$ day$^{-1}$ estimated in \cite{Rihan2014} and \cite{Kuznetsov1994}, respectively, also lie in the assumed range. 

{\bf Immune response function ($F$,  $\alpha$, $k_2$, $a$).} It is assumed that the immune system response term 
has the same form as those considered e.g. in~\cite{Kuznetsov1994,Kirschner1998,Pillis2001,d2010delay,Rihan2014}  or \cite{Roesch2013}, that is $F=Y^{\alpha}/(k_2^{\alpha}+Y^{\alpha})$ with 
$\alpha=1$. In \cite{Kuznetsov1994} the value of  the steepens of the immune response function ($k_2$) was 
estimated to be $2.019\times10^7$ cells, while in \cite{Rihan2014} the value $1.5109\times10^7$ cells.
In this study the range of $10^7-2\times10^8$ cells for $k_2$ is chosen.
To determine the range of the maximal immune system response rate ($a$)  \cite{Rihan2014} is followed and 
range 0-2.5 day$^{-1}$, which contains the value $0.1245$ day$^{-1}$ estimated in~\cite{Kuznetsov1994}, is 
taken.

{\bf  Tumour independent production rate of effector cells ($w$).}
Following~\cite{Rihan2014} the range for the constant influx of effector cells ($w$) from the immune 
system in the absence of tumour is chosen to be $3.2\times10^3-3.2\times10^4$ cells day$^{-1}$. The value 
$1.3\times10^4$ cells day$^{-1}$  estimated by \cite{Kuznetsov1994} belongs to the considered range.

{\bf Tumour-induced inactivation rate of effector cells ($b$). }
In~\cite{Rihan2014} the range for the fraction of immune cells inactivated due to the interactions with tumour cells 
($w$) was assumed to be $10^{-12}-5\times10^{-7}$ cells$^{-1}$ day$^{-1}$. The value $3.422\times10^{-10}$ 
cells$^{-1}$ day$^{-1}$ estimated in~\cite{Kuznetsov1994} belongs to the assumed range. 

{\bf Effector induced elimination rate of tumour cells (c).}
In~\cite{Rodriguez-Perez2007} the range for the fraction of cancer cells killed per effector cell 
was assumed to be  $10^{-9}-10^{-7}$ cells$^{-1}$day$^{-1}$. However, in \cite{Kuznetsov1994} the value of that parameter was estimated to be equal to $1.101\times10^{-7}$  cells$^{-1}$day$^{-1}$. Hence, the  range is extended and it is assumed to be $c\in[10^{-9},3\times10^{-7}]$. 


{\bf Time delay in the immune system activation ($\bar\tau$)}.
In this study it is assumed that time delay in immune system response is in range from 1h till 5 days, i.e. 0.0416-5 days. 

All ranges of the model parameters together with corresponding references are given in Table~\ref{tab:range}. 

\begin{table}[htb]
\caption{Parameter ranges with referenced used for the fitting procedure to the experimental data from~\cite{Uhr1991}.}\label{tab:range}
\begin{tabular}{|lp{0.33\linewidth}llp{0.16\linewidth}|}
\hline
Par. & Biological interpretation & Unit & Range& Reference \\
\hline
$w$ &  tumour independent production rate of effector cells & cells day$^{-1}$ & $3.2\times10^3-3.2\times10^4$ & \cite{Rihan2014}, \cite{Kuznetsov1994} \\
\hline
$u$ &  natural lymphocyte death rate & day$^{-1}$ & $1.25\times10^{-2}-5\times10^{-2}$ &\cite{Reynolds1985} \cite{Rihan2014}, \cite{Kuznetsov1994}.  \\
\hline
$a$ & maximal immune response rate   & day$^{-1}$ & 0-2.5 & \cite{Rihan2014}, \cite{Kuznetsov1994} \\ 
\hline
$b$ & fraction of immune cells inactivated in interactions with tumour cells  &  cells$^{-1}$ day$^{-1}$ & $10^{-12}-5\times10^{-7}$ & \cite{Rihan2014}, \cite{Kuznetsov1994}\\
\hline
$r$ & tumour growth rate &  day$^{-1}$ & $0.01-0.5$ &  \cite{Roesch2013}, \cite{Rihan2014}, \cite{Kuznetsov1994}, \cite{Tubiana1989} \\
\hline
$c$ & fraction of cancer cells killed per effector cell & cell$^{-1}$ day$^{-1}$ & $10^{-9}-3\times10^{-7}$ & \cite{Rodriguez-Perez2007}, \cite{Kuznetsov1994}
\\
\hline
$\bar\tau$ & time delay & day & $ 0.0416-5$ & ass.  \\
\hline
$k_2$ & steepens of immune response  & cells & $10^7-2\times10^8$ & \cite{Rihan2014},\cite{Kuznetsov1994} \\
\hline
$\alpha$&detrmines type of immune response &--&1& \cite{Kuznetsov1994}, \cite{Kirschner1998}, \cite{Pillis2001}, 
\cite{d2010delay}, \cite{Rihan2014}, \cite{Roesch2013}\\
\hline
\end{tabular}
\end{table}


\subsection{Estimation of model parameters}
Clearly, if there are no tumour cells in the system ($Y=0$), then immune-activity stays at the constant level 
calculated directly from~\eqref{system:nondim:1}. Hence, the initial condition for the variable $X$ is a constant equal to 
$w/u$. On the other hand, to determine the level of the tumour cells at the 
beginning of the experiment (at $s=0$) one can directly use the data from \cite{Uhr1991} taking $\Phi_2(0)=5\times10^5$ 
cells. Thus, the initial condition for non-scaled  version of \eqref{system:onedelay} reads 
\begin{equation}\label{eq:initcond}
\Phi_1(s)=\frac{w}{u} \,\,\,\textrm{for}\,\,\, s\in[-\bar\tau,0], \quad \textrm{and} \quad  \Phi_2(s)=\begin{cases}
					0 & \textrm{for}\,\,\, s\in[-\bar\tau,0), \\
					 5\times10^5& \textrm{for}\,\,\, s=0, 
				\end{cases} 
\end{equation}
and due to the dependencies on the parameters $w$, $u$ and $\bar\tau=\tau/r$ the initial condition is estimated within the fitting procedure. 

The fitting procedure is performed in the commercial MATLAB$^\circledR{}$ computational environment using the
{\it PSO} function. PSO finds a minimum of a function of several variables using the particle 
swarm optimization algorithm based on the stochastic optimization technique that was inspired by the social behaviour 
of bird flocking or fish schooling. The algorithm was originally introduced in 1995 by Kennedy and Eberhart, 
\cite{Kennedy1995}, and later extended by Shi and Eberhart, \cite{Shi1998}. Nowadays it has a number 
of extensions, however the main idea is the same a swarm of particles moves around in the search space and 
the movements of the individual particles are influenced by the improvements discovered by the others in the 
swarm.  In~\cite{Clerc2002} Clerc and Kennedy introduced a constriction factor in PSO, which was 
later shown to be superior to the inertia factors. That version of algorithm 
is used here. Solutions of delay differential equations (DDEs) are computed using {\it dde23} function with tolerance RelTol =1e-7, and 
AbsTol=1e-11.


In the presented study the objective function is the standard one based on the calculation of mean square error (MSE):
\[
\textrm{MSE}=\frac{1}{m}\sum_{i=1}^{m}err_i^2, \quad \quad err_i=\frac{x^{data}_i-x^{sym}_i}{x^{data}_i},
\]
where $m$ is the number of experimental measurements, $x^{data}_i$ are the measured values and $x^{sym}_i$ are the values of the solution of DDEs for respective time points.

\begin{table}[htb]
\caption{Fitted non-scaled model parameters, steady states and theirs stability for $\bar\tau=0$ for non-scaled system~\eqref{system:onedelay}. Parameter $\alpha$=1.}\label{table2}
\centering
\begin{tabular}{|>{$}l<{$}*{3}{|>{$}r<{$}}}\hline
& \multicolumn{1}{c|}{control} & \multicolumn{1}{c|}{chimeric}  \\ \hline
\multicolumn{3}{|c|}{\it parameters}\\ \hline
w  & \multicolumn{2}{c|}{$3.0337\times10^4$} \\[2pt]
u  & \multicolumn{2}{c|}{$0.0196$}  \\[2pt]
a & 0.1452 & 0.2835  \\
k_2 & 1.3604\times 10^8 & 1.8720\times10^8 \\
b  & 2.1010\times10^{-10} & 6.6828\times10^{-10} \\
r & \multicolumn{2}{c|}{0.4695}  \\[2pt]
c & 1.7608\times10^{-7} & 2.0761\times10^{-7} \\
\bar\tau & 3.1869 & 0.0581 \\ \hline
\multicolumn{3}{|c|}{\it steady states and theirs stability for $\bar\tau=0$}\\ \hline
\text{semi trivial unstable}  & (1.5472\times 10^6, 0) & (1.5472\times10^6, 0) \\
\text{positive stable} & (2.6666\times 10^6, 1.0548\times 10^7)  & (2.2616\times 10^6, 7.8919\times 10^6)  \\
\text{positive unstable} &(2.6666\times 10^6,5.0530\times 10^8) & (2.2616\times10^6, 2.1986\times10^8) \\
\hline
\end{tabular}\\
\end{table}


\begin{table}[htb]
\caption{Fitted model parameters for system~\eqref{system:onedelay}, 
parameter $\alpha$=1.}\label{table3}
\centering
\begin{tabular}{|c|*{5}{r}|}\hline
& $d$ & $\omega$ & $\rho$ & $\psi$ & $\tau$ \\
\hline
control & \multirow{2}{*}{$0.0418$} & $0.5802$ & $7.4046$ & $1.4576$  & $1.4964$ \\
chimeric & & $0.6841$ & $14.4583$ & $6.3802$ & $0.0273$ \\
\hline
\end{tabular}\\
\end{table}

\begin{table}[htb]
\caption{Summary of the bifurcation phenomena for scaled parameters given in~Table~\ref{table3} and $\tau$ treated as a bifurcation parameter, $\alpha=1$.}\label{table4}
\centering
\begin{tabular}{|c|l|c|c|c|c|}\hline
Param. type & steady state & stability $\tau=0$ & $\tau_{cr}$ & period & bif. type\\
\hline
\multirow{3}{*}{control} & (0.5802, 0) & unstable &  \multicolumn{3}{c|}{unstable for $\tau\geq0$} \\
\cline{2-6} 
				    & (1, 0.0775) & stable & 1.1776 & 50.1363& supercritical\\
\cline{2-6}
				    & (1, 3.7145) & unstable &  \multicolumn{3}{c|}{unstable for $\tau\geq0$} \\
\hline
\multirow{3}{*}{chimera} & (0.6841, 0) & unstable &  \multicolumn{3}{c|}{unstable for $\tau\geq0$} \\
\cline{2-6}
				    & (1, 0.0422) & stable & 1.2227 & 57.3715& supercritical\\
\cline{2-6}
				    & (1, 1.1745) & unstable &  \multicolumn{3}{c|}{unstable for $\tau\geq0$} \\
\hline
\end{tabular}\\
\end{table}

First, the solutions of the non-scaled version of system~\eqref{system:onedelay} are fitted to the data from the 
chimera experiment for BCL$_1$ tumour cell population in the spleen of the BALB/c mice over 110 days (described 
in~\cite{Uhr1991}) with the initial 
condition given by~\eqref{eq:initcond} that depends on $w$, $u$ and $\bar\tau$ parameter values. The result of 
the 
fitting procedure is presented in Figure~\ref{fig:2} (left) showing a good agreement with the experimental data (with MSE at the level of $3.57\%$) and thus positively validates the proposed model.

Next,  the parameters $w$, $u$ and $r$ are fixed (as for chimera experiment) and the rest of the parameters are fitted to the 90 day control mice experiment to further validate the model. This time  a very~good agreement with the experimental data 
with MSE at the level of $0.14\%$ is obtained, see Figure~\ref{fig:2} (right). The corresponding values of the estimated parameters for both experiments are given in Table~\ref{table2}.

\begin{figure}[htb!]
\centerline{\includegraphics[width=0.95\textwidth]{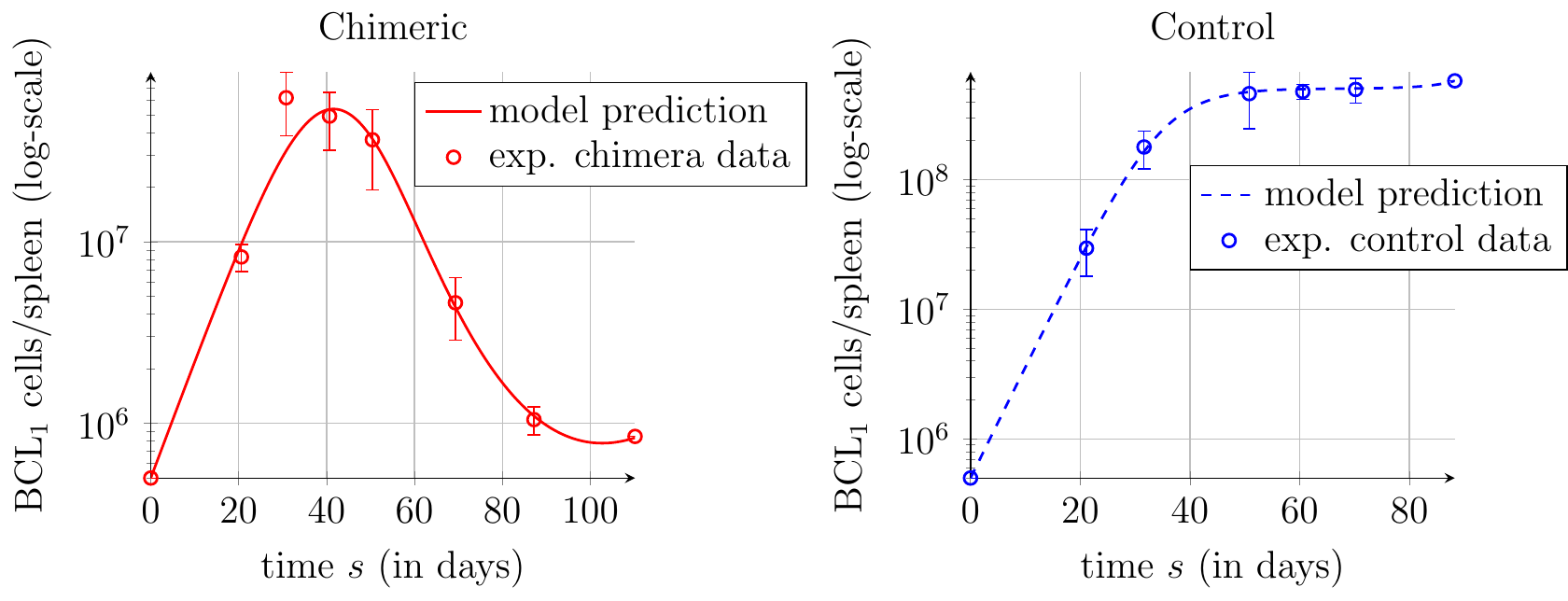}}
\caption{(Left) Result of the fitting procedure of the trajectories of the non-scaled version of system~\eqref{system:onedelay}  to the experimental data for BCL$_1$ tumour cell population in the spleen of the chimeric mice over 110 days experiment described in~\cite{Uhr1991}. MSE=0.0357.
(Right) Result of the fitting procedure of the trajectories of the non-scaled version of system~\eqref{system:onedelay} to the experimental data for BCL$_1$ tumour cell population in the spleen of the BALB/c mice over 90 days control experiment described in~\cite{Uhr1991}. MSE=0.0014. In both panels solutions of the non-scaled version of system~\eqref{system:onedelay} are computed with RelTol=1e-7 and AbsTol=1e-11. All used parameter values, for both panels, are given in Table~\ref{table2}. Bars indicate the measurement errors as reported in~\cite{Uhr1991}.}\label{fig:2}
\end{figure}
%
\section{Results}\label{sec:res}

Comparing the sets of parameters (see Table~\ref{table2}) for both experiments estimated via the fitting procedures one 
immediately sees that for the chimeric set of parameters parameter $a$ is two times larger than for the 
control set. Thus, together with slightly larger $k_2$ (for the chimeric experiment) implies a stronger stimulation of the immune 
system due to the presence of the tumour cells. 
Additionally, the time lag of that response is much shorter, approx. 1.4 h, for 
the chimeric set of parameters than for the control, approx. 76.5~h, indicating a more effective immune system reaction in the presence of the tumour cells for the chimeric mice.
For both the chimeric and control sets, the estimated value of the tumour cells killed per effector cells $(c)$ is very similar.
Despite the large range for parameter $b$ ($10^{-12}-5\times10^{-7}$, see Table~\ref{tab:range}) and 
that no further assumptions on the range of $b$ to differentiate the chimeric and control experiments 
are made, the fitted values obtain the same order of magnitude.


The simulations presented in Figure~\ref{fig:3} again validate positively the model and confirm results from~\cite{Uhr1991} showing 
that for larger than control initial injection of populations of the tumour BCL$_1$ cells into the spleens of the chimeric 
mice leads to the same growth pattern. Experiments performed in~\cite{Uhr1991} indicate that even if one injects 
$5\times10^7$ cells after around 110 days, due to the immune system response, the number of tumour cells is 
reduced to the almost the same amount as for $5\times10^5$ initially injected cells. To reproduce that 
phenomena all used parameter values were fixed, as given in Table~\ref{table3} (column chimeric) and only a 
part of the initial condition that is $\Phi_2(0)$ was changed, see Figure~\ref{fig:3}.
From the mathematical point of view such behaviour shows that the stable steady state, that can be identified as a dormant tumour state, has large enough basin of attraction.  

\begin{figure}[htb!]
\centering
\includegraphics[width=0.49\textwidth]{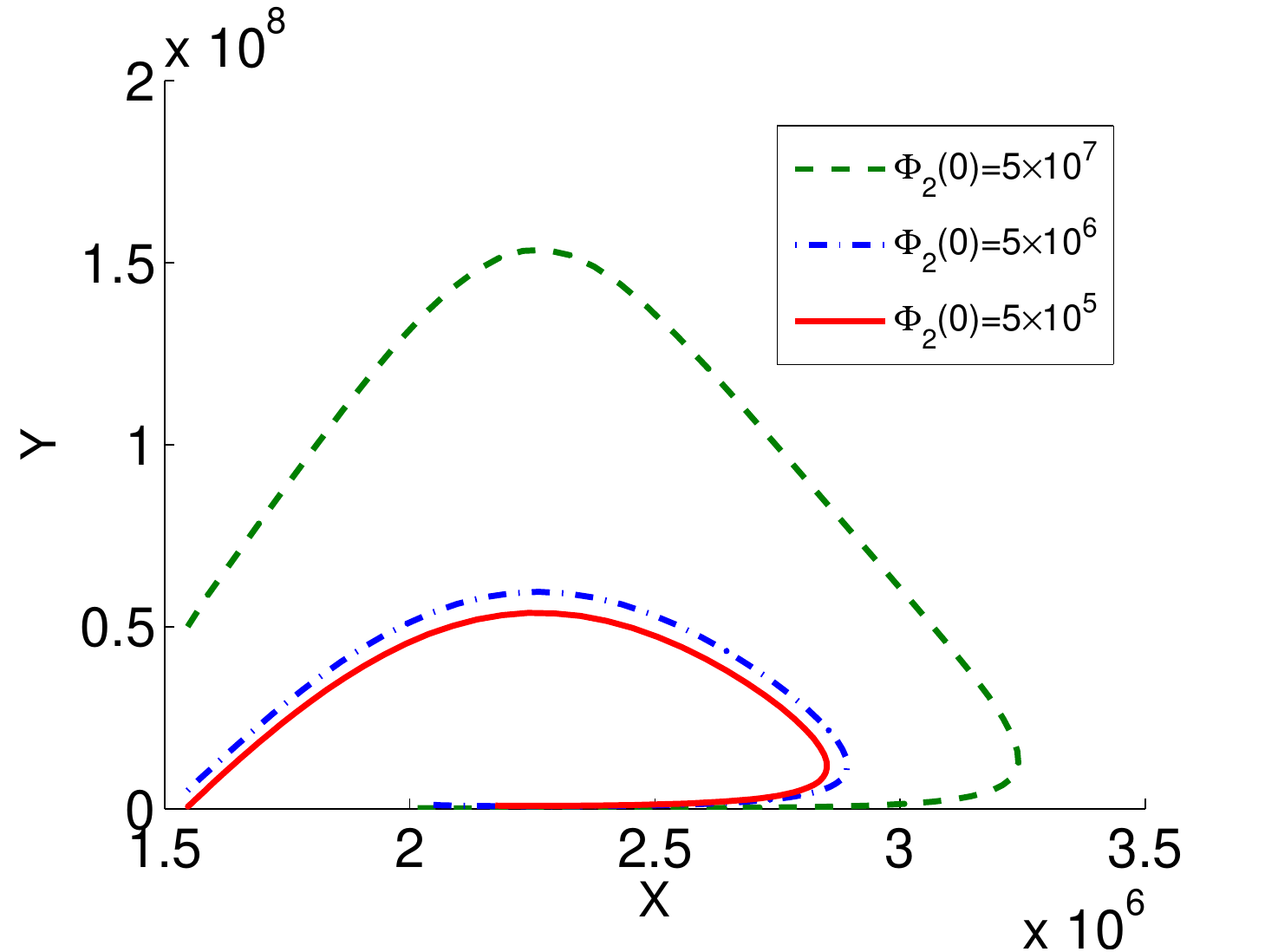}
\caption{
Trajectories of the non-scaled version of tumour-immune system~\eqref{system:onedelay} for initial condition as defined in \eqref{eq:initcond} with different values of $\Phi_2(0)$: $5\times10^5$ (dash green line), $5\times10^6$ (dash-dot blue line) and $5\times10^7$ (solid red line). All used parameter values are given in Table~\ref{table2} (column chimeric), $s\in[0,110]$. 
All solutions computed with RelTol=1e-7 and AbsTol=1e-11.}\label{fig:3}
\end{figure}

\begin{figure}[htb!]
\centering
\includegraphics[width=0.49\textwidth]{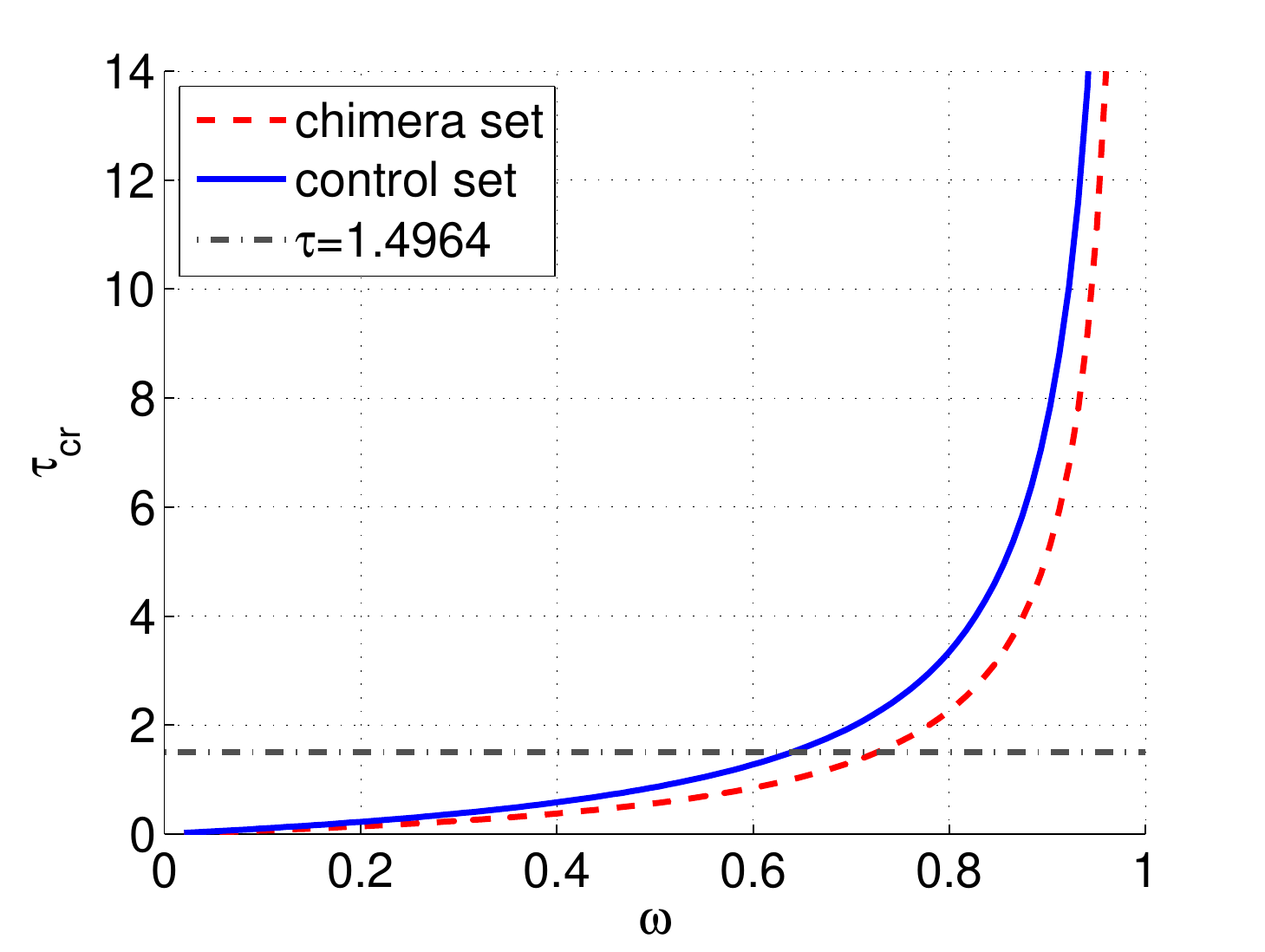}
\includegraphics[width=0.49\textwidth]{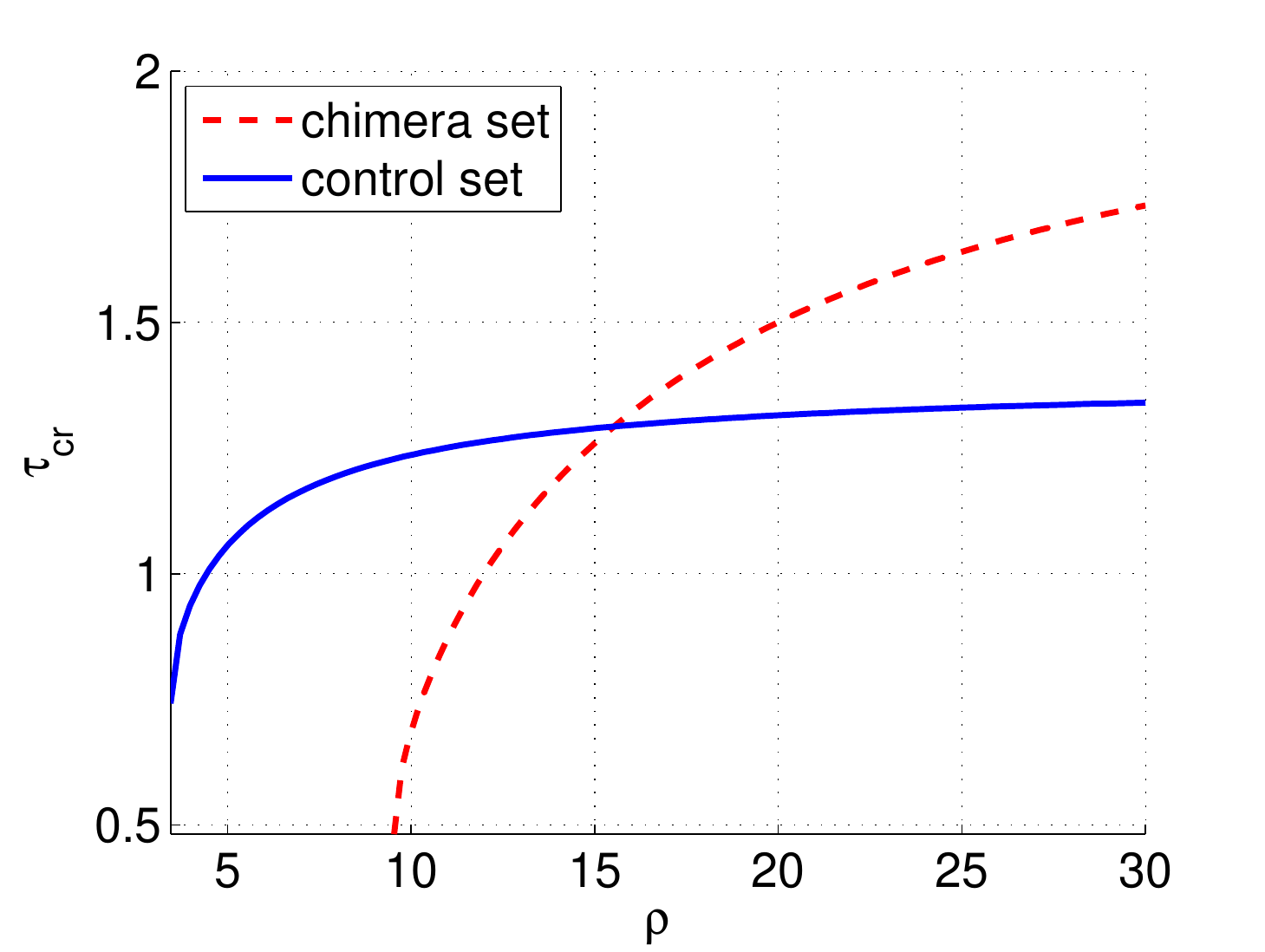}
\caption{(Left) Critical values of the bifurcation parameter as the functions of $\omega$ (left) and as the functions of $\rho$ (right) for both chimeric (red dash line) and control (blue solid line) sets of parameters. All used parameter values (except $\omega$, $\rho$ and $\tau$) are given in Table~\ref{table3}. The stability regions lie below the solid blue and the dashed red lines. The horizontal grey dash-dot line in the left panel indicates the value of delay estimated via the fitting procedure for the control experiment $\tau=1.4964$.}\label{fig:3a}
\end{figure}

For the sets of estimated parameters  that have been scaled and are given in Table~\ref{table3} the system~\eqref{system:onedelay} has one semi-trivial steady state which is equal to $(0.5802, 0)$  and $(0.6841,0)$ for the control and chimeric parameter sets, respectively.  
For both sets of parameters $\omega<1$, thus both semi-trivial steady states are unstable. Moreover, according to Theorem~\ref{trivialss} they stay unstable independently of the value of time delay.

Additionally, for both sets of parameters,
assumptions of Proposition~\ref{prop2} are fulfilled and thus,  for both cases  the
system~\eqref{system:onedelay} has two positive steady states: $B_{c}=(1, 0.0775)$ and 
$C_{c}=(1,3.7145)$ for control;  $B_{chim}=(1, 0.0422)$ and $C_{chim}=(1, 1.1745)$ for chimera.
Interesting is that the values of the positive steady states  $B_{chim}$ and $B_{c}$ are similar.  
For $\tau=0$ one can easily check that the steady states $B_{c}$ and $B_{chim}$ are 
locally asymptotically stable, while the steady states $C_{c}$ and $C_{chim}$ are unstable, see 
Table~\ref{table4}. Considering $\tau$ as a bifurcation parameter and using results of Theorem~\ref{pmn} 
one obtains that independently of the value of time delay the steady states $C_{c}$ and $C_{chim}$ remain 
unstable. The situation differs for steady states $B_{c}$ and $B_{chim}$, 
whose stability changes for 
sufficiently large values of parameter $\tau$. Moreover, Theorem~\ref{pmn} indicates that there can be only a single stability switch for each $B_c$ and $B_{chim}$ steady states for time delay treated as a bifurcation parameter.  For the considered sets of parameters the critical values of the bifurcation parameter are $1,1776$ 
and $1.2227$ for control and chimeric parameter sets, respectively.
Theorem~\ref{pmn} and the results of the numerical analysis indicate that
for sufficiently large time delay, the periodic behaviour of the solutions of system~\eqref{system:onedelay} should be observed if the initial condition is properly chosen.
Clearly, for the estimated chimeric set of parameters the delay value dose not exceed the critical value, thus for that set 
of parameters the  steady state $B_{chim}$ is locally asymptotically stable. For the control set of parameters, $B_c$ is unstable since for $\tau=1.1776$ the Hopf bifurcation takes place. 

The functional dependences of the critical values of the bifurcation parameter ($\tau_{cr}$) for the steady states 
$B_c$ and $B_{chim}$ on the model parameters $\omega$ and $\rho$ are presented in~Figure~\ref{fig:3a}, where all parameters 
are fixed (as in Table~\ref{table3}) except $\omega$, $\rho$ and $\tau$. Figure~\ref{fig:3a} left 
indicates that the critical values of time delays, for both  chimeric and control sets of parameters, are the 
increasing functions background immunity. Thus, the increase of the native immunity increases the stability region of 
the steady states $B_c$ and $B_{chim}$. Proposition~\ref{prop2} indicates that for $0<\omega<1$ (that is the case of the estimated 
values of parameters) and $\rho\rightarrow(\sqrt{\psi}+\sqrt{1-\omega})^2$ 
(that is $\rho\rightarrow 9.5355$ and $\rho\rightarrow 3.4419$, for chimeric and control sets of parameters, 
respectively) the positive steady states merge, thus the critical values $\tau_{cr}$ can only be computed for 
sufficiently large values of $\rho$. Additionally, there are no upper limits for $\rho$.
Similarly as in the previous case,  $\tau_{cr}$ for both sets of parameters are increasing functions of $\rho$. However, 
this time they are concave functions. Thus, 
increasing the stimulation strength enlarges the stability region of the 
steady states $B_c$ and $B_{chim}$, however in a different manner.   


In the general case a large number of model parameters makes the analytical
study of the nature of the Hopf bifurcations, i.e. of the stability of the arising periodic solutions, hard to perform. Thus, for the
estimated model parameters the BIFTOOL package  (a  Matlab package 
for numerical bifurcation analysis for DDEs developed by 
Engelborghs et al.) is used, for details and the documentation see~\cite{Engelborghs2001},
\cite{Engelborghs2002} or visit  http://twr.cs.kuleuven.be/research/software/delay/ddebiftool.shtml. 
In Figure~\ref{fig:4} the amplitudes of the periodic solutions as a function of the bifurcation parameter 
$\tau$ for the control and chimeric sets of parameters (see Table~\ref{table3}) for 
model~\eqref{system:onedelay} are presented. Clearly, from the figure one reads that periodic solutions 
are born for critical values of $\tau$ given in the Table~\ref{table4} and exist for $\tau$ greater than 
the critical values of the bifurcation parameter, indicating that the Hopf bifurcations are supercritical.
In both cases, the amplitudes are comparable, having the same order of magnitude. This result holds for 
the non-scaled model as well since the parameter $r$ is held constant across both cases, whilst the 
fitted value of $c$ has the same order of magnitude between cases.

\begin{figure}[htb!]
\centering
\includegraphics[width=0.40\textwidth]{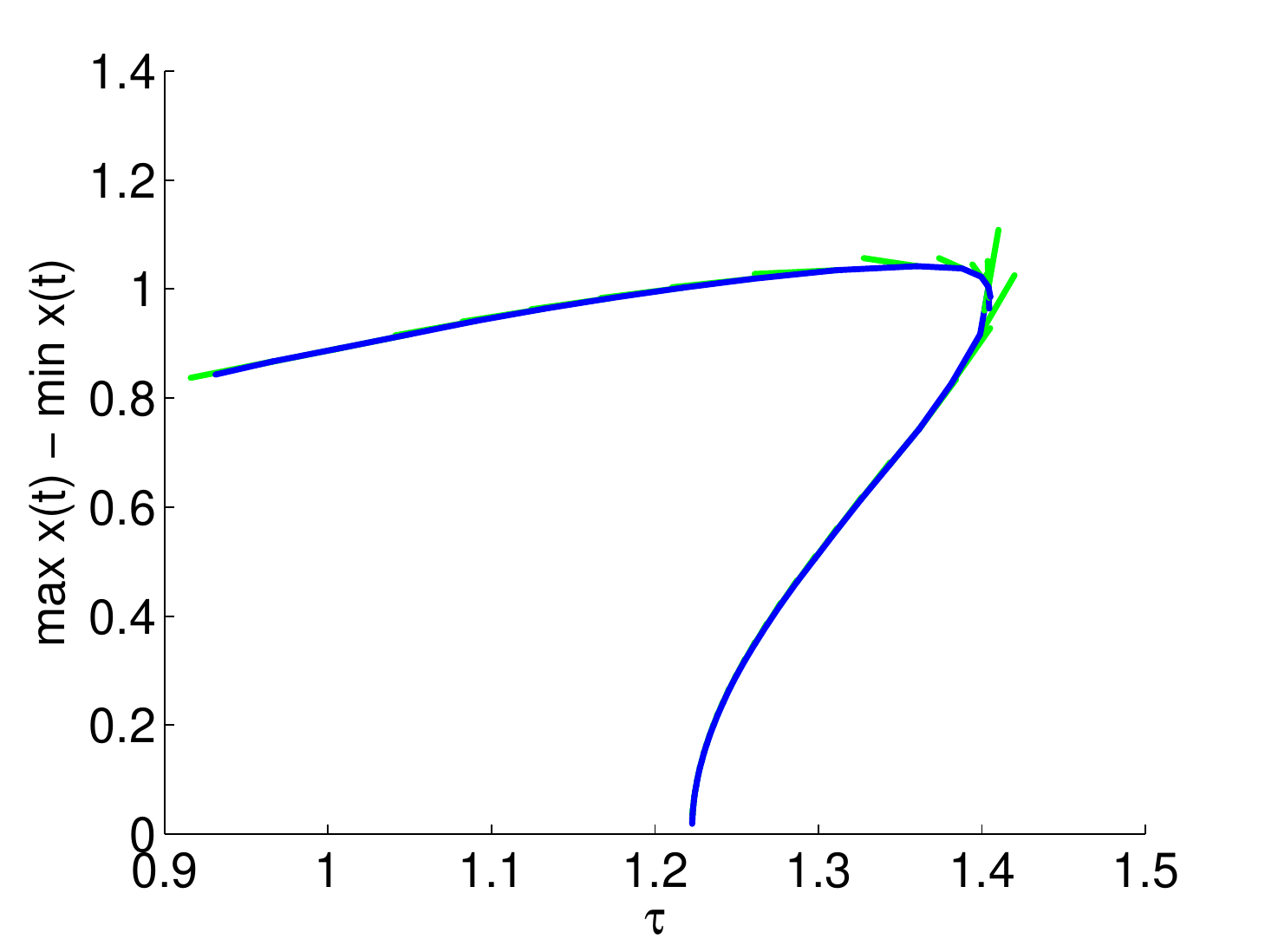}
\includegraphics[width=0.40\textwidth]{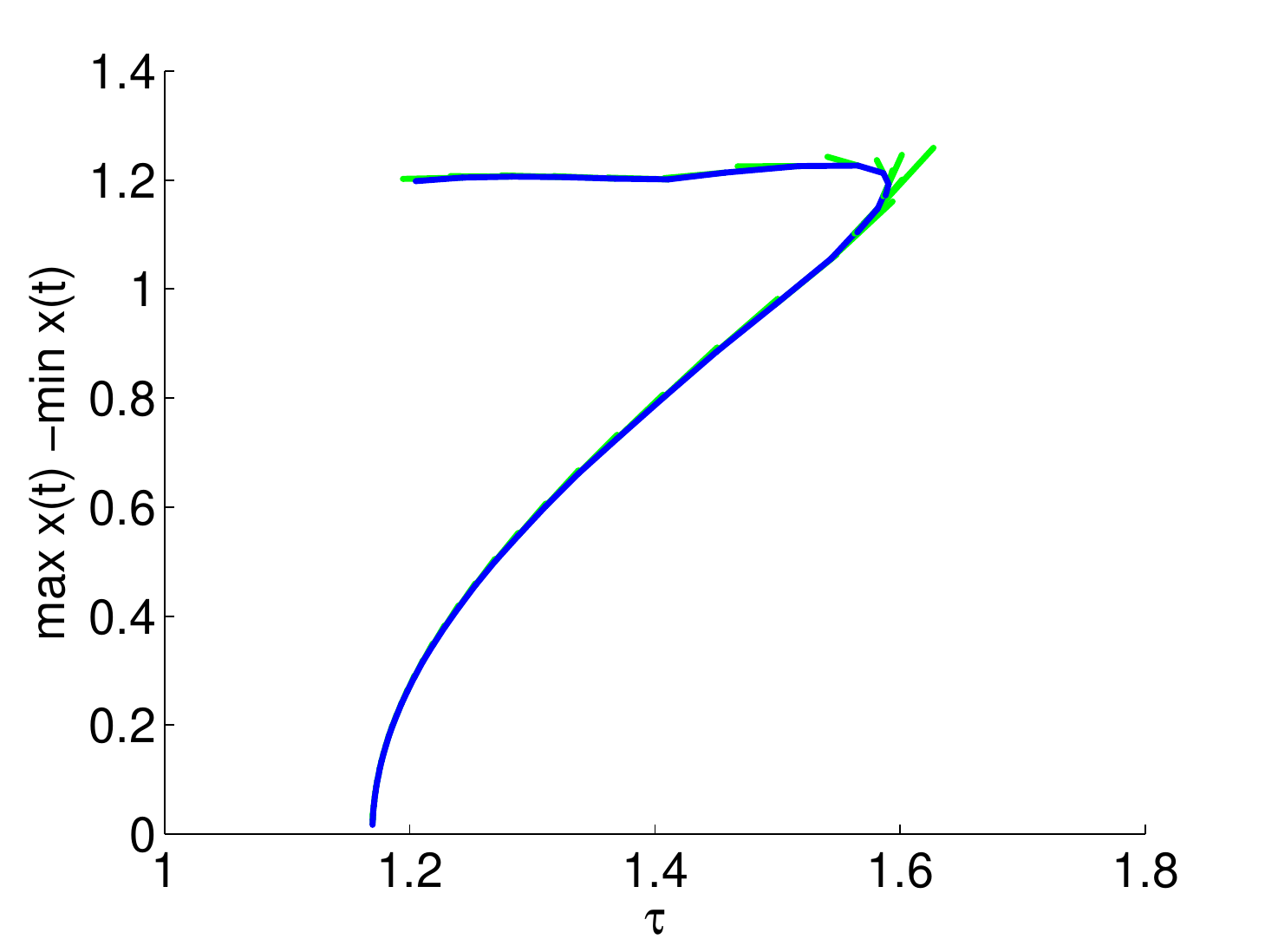}
\caption{Amplitudes of the periodic solutions arising due to the Hopf Bifurcation as a function of the 
bifurcation parameter $\tau$ for chimeric (left) and control (right) sets of parameters given in 
Table~\ref{table3} for model~\eqref{system:onedelay}. In green the predicted values are marked, while in 
blue the corrected 
ones.}\label{fig:4}
\end{figure}

\begin{figure}[htb!]
\centering
\includegraphics[width=0.40\textwidth]{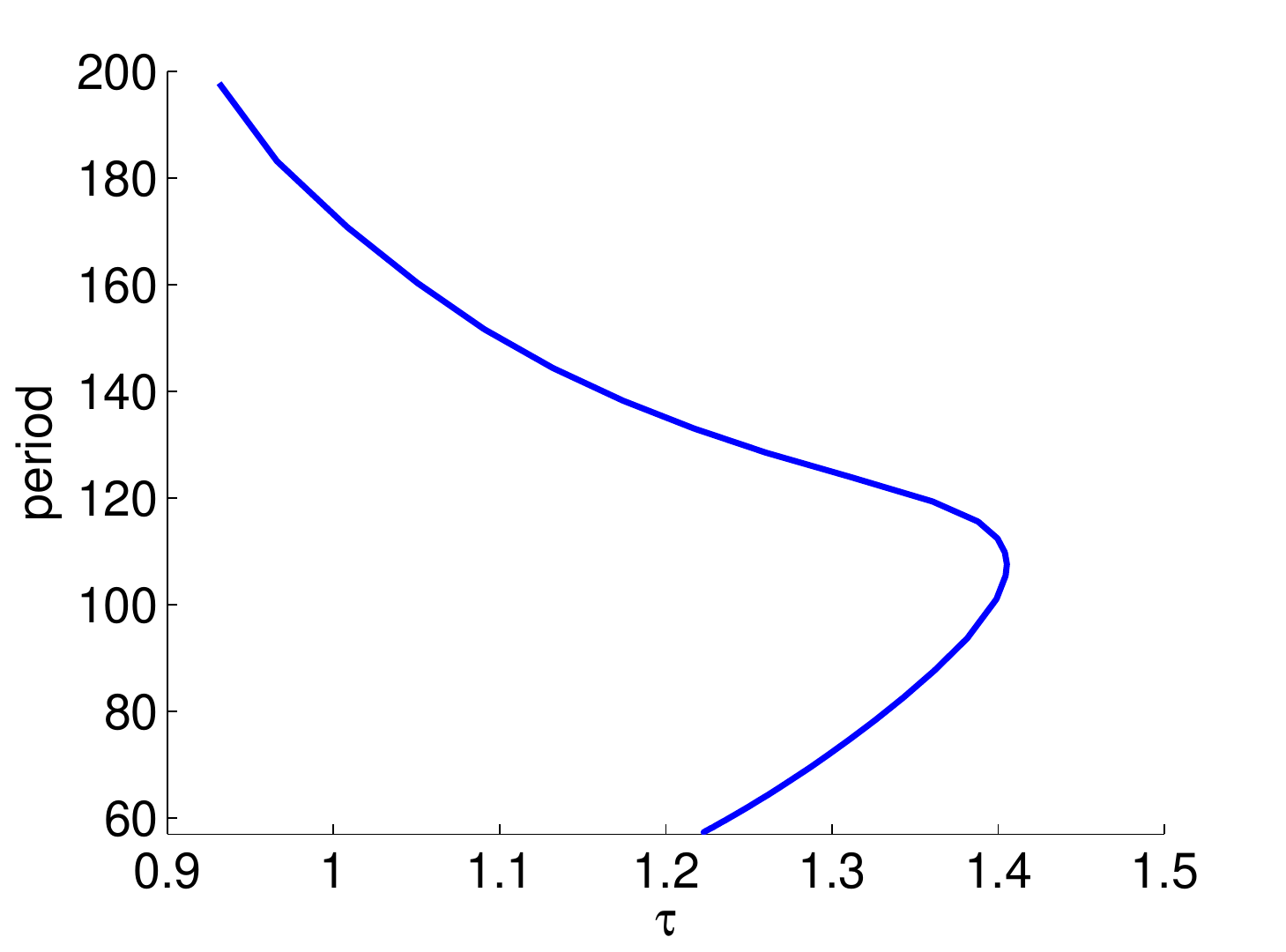}
\includegraphics[width=0.40\textwidth]{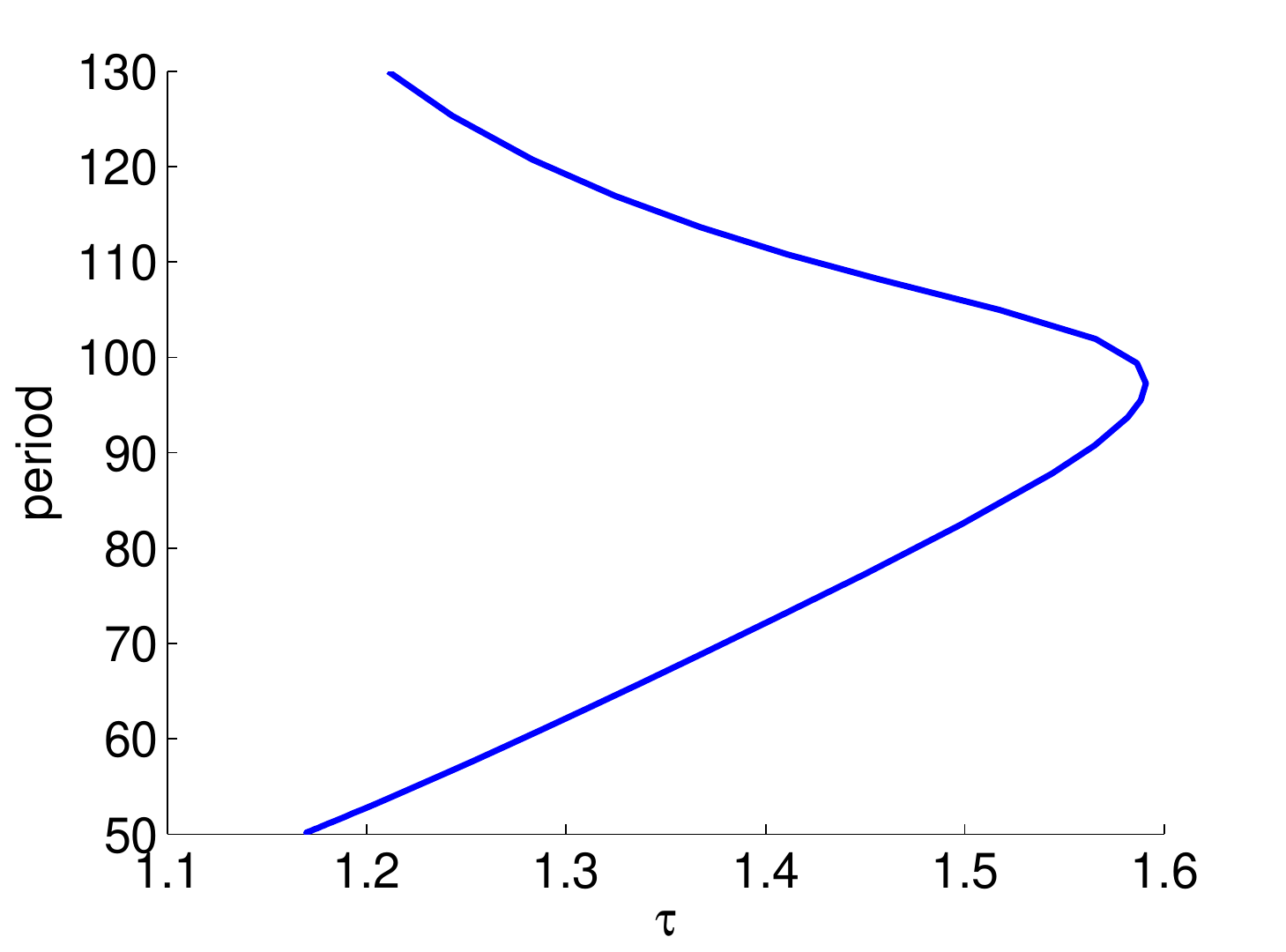}
\includegraphics[width=0.40\textwidth]{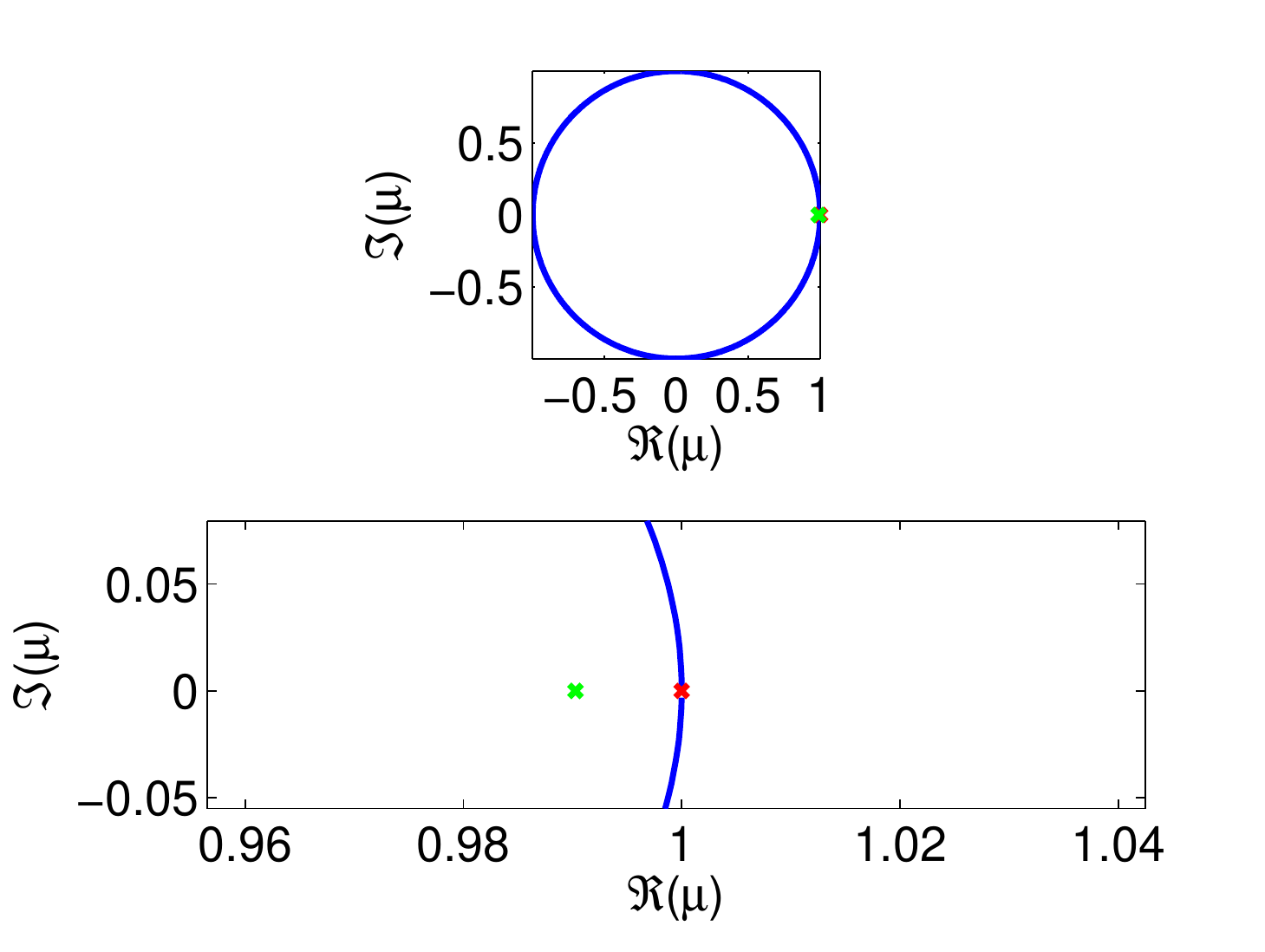}
\includegraphics[width=0.40\textwidth]{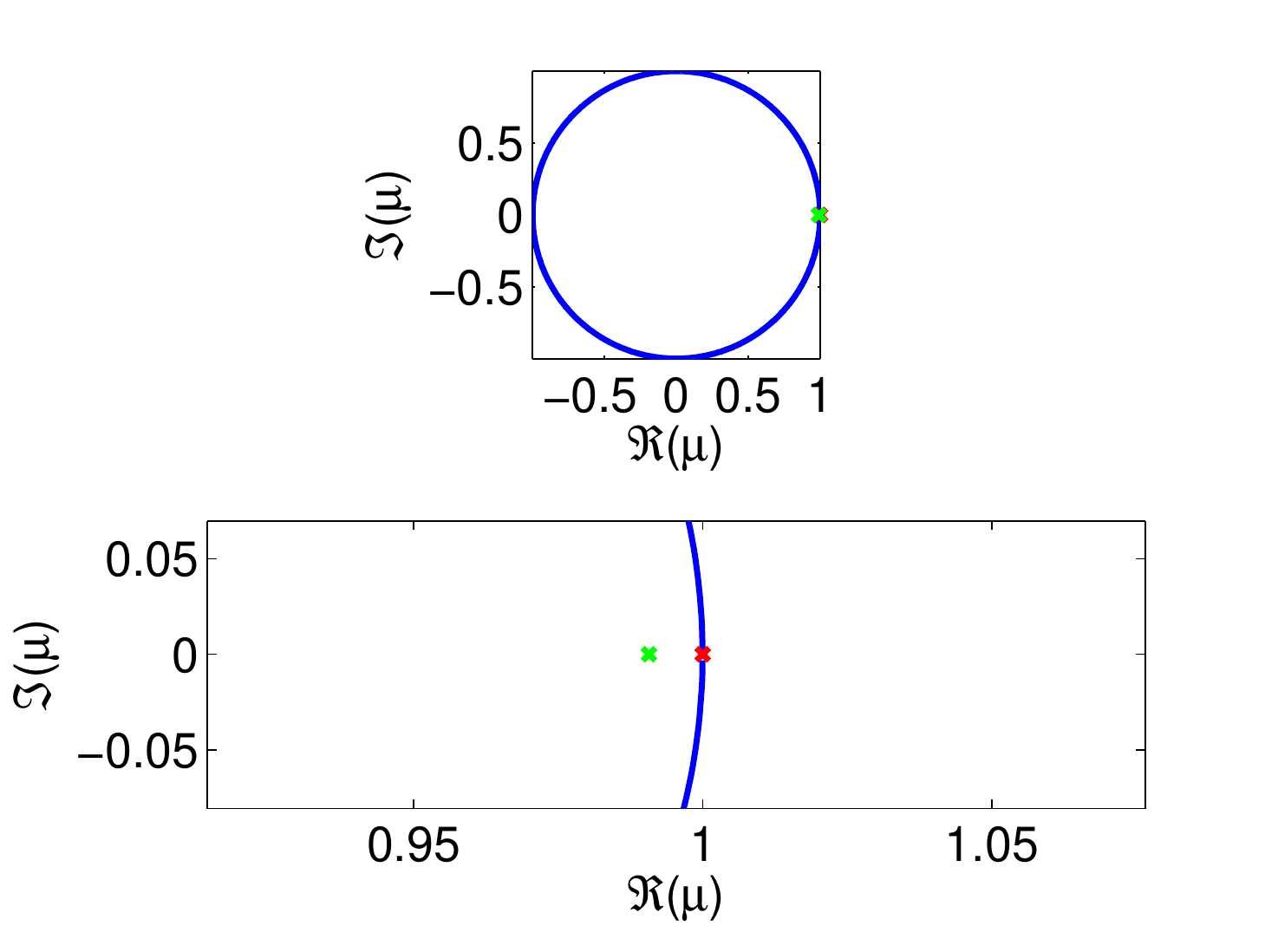}
\caption{Top: Periods of periodic solutions arising from the Hopf bifurcations as 
a function of  bifurcation parameter $\tau$ for chimeric set of parameters (left) and control set of parameters 
(right) for model~\eqref{system:onedelay}. 
Middle and bottom:  Floquet multipliers plotted on the complex plain for the periodic solutions near the 
bifurcation points for chimeric (left) and control (right) sets of parameters  for 
model~\eqref{system:onedelay}. In the bottom row the zoom of the results form the middle row is 
presented. For all simulations model parameters are given in Table~\ref{table3}.
}\label{fig:5}
\end{figure}

In Figure~\ref{fig:5} (top) the period of periodic solutions arising from the Hopf bifurcations as 
a function of  $\tau$ for control and chimeric sets of parameters are presented. 
The directly calculated (Theorem 3.3) values of the periods for critical values of bifurcation parameter 
are  $50.1363$ and $57.3715$ for control and chimeric sets of parameters, respectively, see 
Table~\ref{table4}. Calculating the same for the non-scaled model gives 
periods of approx. $107$ and $122$ days for the control and chimeric sets of parameters, respectively. These pattern agrees with 3-4 
months oscillations observed for the clinical data for certain kinds of human leukemias and also agrees 
with oscillatory growth of human and rat tumour cell lines reported in~\cite{Chignola2003,Chignola2000,Gliozzi2010} or in~\cite{Krikorian1980} for non-Hodgkin's lymphoma.

Interesting is that such recurrent patterns of tumour remission and re-growth have also been clinically observed for 
immunized BALB/c mice, where mice where initially immunized with BCL$_1$-Idiotype before the injection of $10^6$ BCL$_1$ 
tumour cells and without any further immunization,~\cite{Uhr1991}. For such experiment the level of Anti-Id in serum and spleen sizes 
where followed for a period of 115 days showing that in several mice the spontanous regression followed by 
splenomegaly was observed, compare Figure 5 in~\cite{Uhr1991}.      
Moreover, Figure~\ref{fig:5} (top) indicates  that the periods of the periodic solutions 
increase with the increase of the bifurcation parameter thus,
for larger time lags in the reaction of the immune system to the presence of antigens, larger periods of oscillation of the tumour mass and immunity are observed.

In Figure~\ref{fig:5} (middle and bottom) the Floquet multipliers plotted on the complex plain
for both sets of estimated parameters  are presented.
Clearly, the multiplier $(0,1)$ is the trivial one that always exists for the autonomous 
systems, while the ones inside the circle has modulus smaller than $1$. Since periodic solutions are stable if all, 
except (0,1), multipliers have modulus smaller than one, one concludes that for the parameter values 
given in Table~\ref{table3} the 
arising periodic solutions are stable. This implies that the supercritical Hopf bifurcations for both sets of 
parameters are observed, confirming our previous results presented in Figure~\ref{fig:4}. 
From the biological or medical point of view supercritical Hopf bifurcation is 
considered as a {\it safe bifurcation} (of course assuming that the amplitude of periodic solutions can not be 
too large), since although the 
steady state becomes unstable for larger values of the bifurcation parameter the new born periodic 
solutions are stable.
In such situations a local stable limit cycle around the dormant tumour steady state occurs 
and again some kind of boundedness of tumour growth is observed. Such oscillations around the steady 
state corresponds to situation where the tumour grows and shrinks with no application of additional 
treatment, see the experimental tumour volume data
from~\cite{Chignola2003,Chignola2000} or~\cite{Gliozzi2010}. 

%
\section{Discussion}\label{sec:diss}

In this paper  the interaction between  tumour and  immune system by a system of non-linear 
differential equations with discrete delay, representing time lag in the reaction of immune 
system to recognition of the antigens, is described.  The mathematical properties of proposed model such as: existence, uniqueness and non-negativity of solutions are studied. The presented study focusses on the  on the stability of the 
steady states depending on the value of the bifurcation parameter, which is the time delay.
The direct conditions for the stability of the steady states are derived in this paper.
In the general case, 
it is shown that depending on assumptions on the model parameters the stability of the system changes due to Hopf bifurcation for sufficiently large delay.
Moreover, only one such stability switch is possible.  Additionally, it is shown that the unstable (for $\tau=0$) steady states remain unstable independently of the values of delay. 

Next, the proposed model is calibrated with two sets of the experimental data for BCL$_1$ tumour cells injected to the mice spleens,~\cite{Uhr1991}, with MSEs at the 
level of  3.57\% and 0.14\%. For the estimated values of parameters, the proposed model is further validated with other 
experiments for different initial numbers of injected cells.
For both sets of estimated values of the parameters the considered model has two positive steady states and one 
unstable semi-trivial steady state. For the chimeric set of parameters there are stable and unstable positive steady 
states, which indicate a local stabilization (dormancy) of tumour growth at finite size if 
the initial size of tumour is sufficiently small (as in case of considered experimental setup) or infinite tumour growth 
otherwise. 
For the estimated control set of parameters both positive steady states are unstable, however the periodic orbits 
exist. That means that depending on the initial conditions the unrestricted tumour growth (as for the control set of parameters) or periodic oscillations of tumour size are observed.   
  
The second coordinates of the positive stable steady state  for the chimeric set of the parameters and 
positive steady 
state (that is stable for sufficiently small delay) for the control set of parameters are similar.
Hence, by decreasing the time response of the immune system to the presence of the tumour cells one might stabilize the 
uncontrolled tumour growth in case of the control experiment.

Additionally, for both estimated sets of parameters (except time delay treated as a bifurcation parameter) 
the presented analytical study is extended by investigating the 
stability of the new born periodic orbits and thus the type of existing Hopf bifurcations.
It should be pointed out that the oscillatory behaviour of the multicellular tumour volume was experimentally 
observed, even without administration of any treatment, as reported for the {\it in vitro} experiments for human and rat cell lines
reported in~\cite{Chignola2000} and~\cite{Chignola2003}.  
Such oscillations are explained e.g. as the effect 
of non-linear feedback between proliferation at the multicellular tumours surface and invasion of the 
surrounding by tumour cells leaving the spheroid surface however other scenarios are also discussed. 
The oscillating growth phenomenon is also observed for the {\it 
in vivo} studies and take place for e.g. the mice initially immunized with BCL$_1$-Idiotype before the injection 
of $10^6$ BCL$_1$ tumour cells or in non-Hodgkin's lymphoma~\cite{Krikorian1980}.

Although the model is rather simple,  its dynamics might be complex and reflect a number of bio-medical phenomena such as: therapy failure considered as a modification of model parameters, i.e. an unrestricted growth of the tumour; stabilisation (at least 
temporary) of tumour growth at a finite size -- dormancy; and desirable regression of the 
tumour. 
If the initial size of the tumour (or rather amount of antigens) is strictly greater than zero, then  according to the
values of the model parameters seven different scenarios are possible: (i)
asymptotic unrestricted growth of the tumour independently of the initial tumour size; (ii)
stabilisation of the tumour growth  independently of initial tumour size (existence of globally stable 
equilibrium); (iii)
global eradication of the tumour by the immune system independently of the initial tumour size;
(iv)
local stabilisation of tumour growth at a certain finite size (if the initial tumour size is sufficiently small one observes 
stabilisation of the tumour growth at a finite size, otherwise unrestricted growth); (v)
local eradication of the tumour by the immune system (for sufficiently small initial size tumour will be  
eradicated by the immune system, otherwise it will grow unrestrictedly); (vi)
bi-stability with eradication for sufficiently small  initial tumour, stabilisation on a finite size of tumour of medium 
initial size and unrestricted growth for large initial tumour when two locally stable steady states
co-exist; and finally (vii) recurrent patterns of tumour remission and re-growth. The first six of these features can be  
observed for the model without time delay, while the last one it is not possible since the ODE model has no periodic 
solutions. Hence, the DDE model can reflect the oscillatory 
phenomenon observed for a range of tumour types and in particular in lymphoma.

The main goal of the any-anticancer therapies is to treat disease or at least control its progression. Form the 
mathematical point of view this means that one wishes to ensure that the trajectories of the system are in the region of 
attraction of the proper steady states.  To do that at the level of the considered model one can try to increase or 
decrease some of the parameters. 
In the context of immunotherapy several attempts to increase the immune system efficiency or  the creation of 
anti-tumour immunity were already proposed in 
many different ways \cite{Schreiber1999, Levy1999, Lollini2002}.  
One can consider the increase of the native immunity e.g. by proper incubation of a patient's 
natural killer lymphocytes to enhance their activity and later by their infusion of them into the patient body. In the 
mathematical language that would mean the increase of the $\omega$ parameter which is crucial for reaching 
the global or local eradication of the tumour or the bi-stability phenomenon (depending on the values of the 
other system parameters).   Recall that $\omega=wc/(ur)$. Clearly, since the natural lymphocyte death 
rate ($u$) and the tumour growth rate ($r$) are rather out of reach of immunotherapy one should thus focus on
increasing $w$ and $c$ parameters, that is  stimulation of the production strength and the effectiveness of effector cells. The increase 
of the effective cells can be reached by the infusions of the cells grown {\it ex vivo}, while the increase of the 
killing effectiveness of the cells can be obtained as the result of the infusion of properly chosen 
cytokines,~\cite{Schreiber1999}.  On the other hand, parameter $r$ can be for example decreased by different types of 
chemotherapies. 
Another parameter that has an essential role in determining the range of the controlled tumour growth is the 
stimulation strength $\rho=a/u$. If the tumour antigens are not presented to the lymphocytes, one 
can  increase $a$ by method of artificial loading of the of antigen-presenting calls with tumour antigen. 
The effectiveness of such an approach will be dictated by the values of the background immunity and strength of the 
anti-immune activity.
In this paper,  it is shown that for the model with discrete delay and estimated parameters an increase 
of the background immunity ($\omega$), as well as the stimulation strength ($\rho$) of the tumour 
cells on the immune system, causes  the increase of the critical delay value for which the Hopf 
bifurcation is observed, enlarging the stability region for the positive dormant steady state, see 
Figure~\ref{fig:3a}. Hence, by increasing the background immunity (range is reachable) one might stabilize the  
unrestricted 
 tumour growth observed for the control experiment. 

On the other hand, if one wishes to model the effectiveness of a given therapy or try to 
increase a given therapeutic effect
one should rather directly include these elements in the system, e.g. by adding new terms and considering the optimization problem.
That of course would force the additional calibration of the 
extended model with additional sets of experimental data required.
Researchers who are interested in collaborating with experimentalists or clinicians who have access to patients or {\it in vivo} animal trial data should feel encouraged to follow this idea.
Recently the very promising concept of an anti-tumour 
vaccines has arisen,~\cite{Schreiber1999,Levy1999}.
However, due to the discrete nature of the vaccination program, the presented model would need to be adapted to include an impulse-response framework.
In that context,  the presented calibrated 
model can be a base for the future study of the effectiveness of a different types of therapies, including the 
chemotherapy or combined chemo-immunotherapy.
One should feel free to modify the proposed model by fitting it to appearing needs. For example one can consider 
instead of the exponential tumour growth rather the Gompertz one, which takes into account the finite 
environmental capacity for tumour cells. That might cause an essential qualitative difference between models. 
However, it is still an open question.  On the other hand, to fit even better proposed model with the 
experimental data one could instead of the discrete delay consider the distributed one that often better reflects 
the real measurements.


\section*{Acknowledgements}
The work on this paper was supported
by the Polish Ministry of Science and Higher Education, within the Iuventus Plus Grant: ''Mathematical
modelling of neoplastic processes" grant No. IP2011 041971.
\nolinenumbers


\bibliographystyle{elsarticle-num}

\end{document}